\definecolor{dark_blue}{RGB}{46,87,144}
\newcommand\rev[1]{#1}
\newtheorem{theorem}{Theorem}
\newtheorem{proof}{Proof}
\definecolor{codegreen}{rgb}{0,0.6,0}
\definecolor{codegray}{rgb}{0.5,0.5,0.5}
\definecolor{codepurple}{rgb}{0.58,0,0.82}
\definecolor{backcolour}{rgb}{0.95,0.95,0.92}
\definecolor{mypink}{rgb}{0.9254901960784314, 0.0, 0.32941176470588235}
\definecolor{myorange}{rgb}{0.996078431372549, 0.25098039215686274, 0.07058823529411765}
\definecolor{mygreen}{rgb}{0.14901960784313725, 0.6823529411764706, 0.2549019607843137}
\definecolor{myblue}{rgb}{0.17254901960784313, 0.37254901960784315, 0.6941176470588235}
\lstdefinestyle{mystyle}{
    backgroundcolor=\color{backcolour},   
    commentstyle=\color{mygreen},
    keywordstyle=\color{black},
    numberstyle=\tiny\color{codegray},
    stringstyle=\color{codegray},
    basicstyle=\ttfamily\footnotesize,
    breakatwhitespace=false,         
    breaklines=true,                 
    captionpos=b,                    
    keepspaces=true,                 
    numbers=left,                    
    numbersep=5pt,                  
    showspaces=false,                
    showstringspaces=false,
    showtabs=false,                  
    tabsize=2
}
\newcommand{\sech}{\operatorname{sech}}
\newcommand{\argmin}{\operatorname{argmin}}
\newcommand{\nc}{\newcommand}
\nc{\IC}{\mathbb{C}}
\nc{\ID}{\mathbb{D}}
\nc{\IE}{\mathbb{E}}
\nc{\IN}{\mathbb{N}}
\nc{\IR}{\mathbb{R}}
\nc{\hN}{\hat{N}}
\newcommand\cls{\underset{\clap{\scriptsize cl}}{\subset}}
\nc{\be}{\begin{equation}}
\nc{\ee}{\end{equation}}
\nc{\rank}{\texttt{rank}}
\nc{\size}{\texttt{size}}
\nc{\seed}{\texttt{seed}}
\nc{\mN}{\mathcal{N}}
\nc{\eps}{\varepsilon}
\nc{\mB}{\mathcal{B}}
\nc{\mD}{\mathcal{D}}
\nc{\mO}{\mathcal{O}}
\nc{\mT}{\mathcal{T}}
\nc{\mL}{\mathcal{L}}
\nc{\mR}{\mathcal{R}}
\nc{\mE}{\mathcal{E}}
\nc{\tp}{{\tau_\textup{p}}}
\nc{\td}{{\tau_\textup{d}}}
\nc{\fA}{\mathsf{A}}
\nc{\fD}{\mathsf{D}}
\nc{\fL}{\mathsf{L}}
\nc{\mNN}{\mN\!\!\mN}
\nc{\od}{\overline{d}}
\nc{\Ntr}{N_\textrm{train}}
\nc{\Cpde}{C_\textup{pde}}
\nc{\Cquad}{C_\textup{quad}}
\nc{\bx}{{\bf x}}
\nc{\bn}{{\bf n}}
\nc{\bq}{{\bf q}}
\nc{\bW}{{\bf W}}
\nc{\bb}{{\bf b}}
\nc{\by}{{\bf y}}
\nc{\bz}{{\bf z}}
\nc{\bu}{{\bf u}}
\nc{\btheta}{{\boldsymbol{\theta}}}
\nc{\blambda}{{\boldsymbol{\lambda}}}
\definecolor{dark_blue_pers}{RGB}{46,87,144}
\definecolor{blue_pers}{RGB}{54,104,171}
\definecolor{grey_pers}{RGB}{245,245,245}
\definecolor{red_pers}{RGB}{213,78,33}
\title{h-analysis and data-parallel physics-informed neural networks}
\author[1,2,*]{Paul Escapil-Inchauspé}
\author[1,2,3]{Gonzalo A. Ruz}
\affil[1]{Facultad de Ingenier\'ia y Ciencias, Universidad Adolfo Ib\'a\~nez, Santiago, Chile}
\affil[2]{Data Observatory Foundation, Santiago, Chile}
\affil[3]{Center of Applied Ecology and Sustainability (CAPES), Santiago, Chile}
\affil[*]{paul.escapil@edu.uai.cl}
\keywords{physics-informed machine learning, physics-informed neural networks, GPU acceleration, Horovod.}
\begin{abstract}We explore the data-parallel acceleration of physics-informed machine learning (PIML) schemes, with a focus on physics-informed neural networks (PINNs) for multiple graphics processing units (GPUs) architectures. In order to develop scale-robust and high-throughput PIML models for sophisticated applications which may require a large number of training points (e.g., involving complex and high-dimensional domains, non-linear operators or multi-physics), we detail a novel protocol based on $h$-analysis and data-parallel acceleration through the Horovod training framework. The protocol is backed by new convergence bounds for the generalization error \rev{and the train-test gap}. We show that the acceleration is straightforward to implement, does not compromise training, and proves to be highly efficient \rev{and controllable}, paving the way towards generic scale-robust PIML. Extensive numerical experiments with increasing complexity illustrate its robustness and consistency, offering a wide range of possibilities for real-world simulations.
\end{abstract}
\begin{document}

\flushbottom
\maketitle
%
%
\thispagestyle{empty}

\section*{Introduction}\label{sec:Introduction}
\label{sec:intro}Simulating physics throughout accurate surrogates is a hard task for engineers and computer scientists. Numerical methods such as finite element methods, finite difference methods and spectral methods can be used to approximate the solution of partial differential equations (PDEs) by representing them as a finite-dimensional function space, delivering an approximation to the desired solution or mapping\cite{steinbach2007numerical}.

Real-world applications often incorporate partial information of physics and observations, which can be noisy. This hints at using data-driven solutions throughout machine learning (ML) techniques. In particular, deep learning (DL)\cite{lecun2015deep,bengio2017deep} principles have been praised for their good performance, granted by the capability of deep neural networks (DNNs) to approximate high-dimensional and non-linear mappings, and offer great generalization with large datasets. Furthermore, the exponential growth of GPUs capabilities has made it possible to implement even larger DL models. 

Recently, a novel paradigm called physics-informed machine learning (PIML)\cite{karniadakis2021physics} was introduced to bridge the gap between data-driven\cite{YOU2021113553} and physics-based\cite{SUN2020112732} frameworks. PIML enhances the capability and generalization power of ML by adding prior information on physical laws to the scheme by restricting the output space (e.g., via additional constraints or a regularization term). This simple yet general approach was applied successfully to a wide range complex real-world applications, including structural mechanics\cite{lai2022neural,LAI2021116196} and biological, biomedical and behavioral sciences\cite{alber2019integrating}.

In particular, physics-informed neural networks (PINNs)\cite{RAISSI2019686} consist in applying PIML by means of DNNs. They encode the physics in the loss function and rely on automatic differentiation (AD)\cite{baydin2018automatic}. PINNs have been used to solve inverse problems\cite{Chen2020}, stochastic PDEs\cite{chen2021learning,MENG2020109020}, complex applications such as the Boltzmann transport equation\cite{li2022physicsboltzmann} and large-eddy simulations \cite{largeEddy}, and to perform uncertainty quantification\cite{ZHANG2019108850,escapilruzUQ2022}.

Concerning the challenges faced by the PINNs community, efficient training\cite{Wang2020WhenAW}, proper hyper-parameters setting\cite{escapil2022hyper}, and scaling PINNs\cite{shukla_xu_trask_karniadakis_2022} are of particular interest. Regarding the latter, two research areas are gaining attention.

First, it is important to understand how PINNs behave for an increasing number of training points $N$ (or equivalently, for a \rev{suitable bounded and fixed domain}, a decreasing maximum distance between points $h$). Throughout this work, we refer to this study as $h$-analysis \rev{as being the analysis of the number of training data needed to obtain a stable generalization error}. In their pioneer works\cite{MolinaroPDEs,MolinaroInverse}, Mishra and Molinaro provided a bound for the generalization error with respect to to $N$ for data-free and unique continuation problems, respectively. More precise bounds have been obtained using characterizations of the DNN\cite{shin2020convergence}.

Second, PINNs are typically trained over graphics processing units (GPUs), which have limited memory capabilities. To ensure models scale well with increasingly complex settings, two paradigms emerge: data-parallel and model-parallel acceleration. The former splits the training data over different workers, while the latter distributes the model weights. 
However, general DL backends do not readily support multiple GPU acceleration. To address this issue, Horovod\cite{khoo_lu_ying_2021} is a distributed framework specifically designed for DL, featuring a ring-allreduce algorithm\cite{sergeev2018horovod} and implementations for TensorFlow, Keras and PyTorch.

As model size becomes prohibitive, domain decomposition-based approaches allow for distributing the computational domain. Examples of such approaches include conservative PINNs (cPINNs) \cite{JAGTAP2020113028}, extended PINNs (XPINNs) \cite{CiCP-28-2002,hu2021extended}, and distributed PINNs (DPINNs) \cite{dwivedi2019distributed}. cPINNs and XPINNs were compared in \cite{shukla2021parallel}. These approaches are compatible with data-parallel acceleration within each subdomain. Additionally, a recent review concerning distributed PIML\cite{shukla_xu_trask_karniadakis_2022} is also available. Regarding existing data-parallel implementations, TensorFlow MirroredStrategy in TensorDiffEq\cite{mcclenny2021tensordiffeq} and \href{https://developer.nvidia.com/modulus}{NVIDIA Modulus}\cite{hennigh2021nvidia}, should be mentioned. However, to the authors knowledge, there is no systematic study of the background of data-parallel PINNs and their implementation.

In this work, we present a procedure to attain data-parallel efficient PINNs. It relies on $h$-analysis and \rev{is backed by} a Horovod-based acceleration. Concerning $h$-analysis, we observe PINNs exhibiting three phases of behavior as a function of the number of training points $N$:
\begin{enumerate}
\item A pre-asymptotic regime, where the model does not learn the solution due to missing information;
\item A transition regime, where the error decreases with $N$;
\item A permanent regime, where the error remains stable.
\end{enumerate}
To illustrate this, Fig.~\ref{fig:transient} presents the relative $L^2$ error distribution with respect to $N_f$ (number of domain collocation points) for the forward ``\hyperref[{1DLaplace}]{1D Laplace}'' case. The experiment was conducted over $8$ independent runs with a learning rate of $10^{-4}$ and $20000$ iterations of ADAM\cite{kingma2014adam} algorithm. The transition regime---where variability in the results is high and some models converge while others do not---is between $N_f=\rev{64}$ and $N_f=\rev{400}$. For more information on the experimental setting and the definition of precision $\rho$, please refer to ``\hyperref[{1DLaplace}]{1D Laplace}''.

\begin{figure}[!htb]
\center
\includegraphics[width=\textwidth]{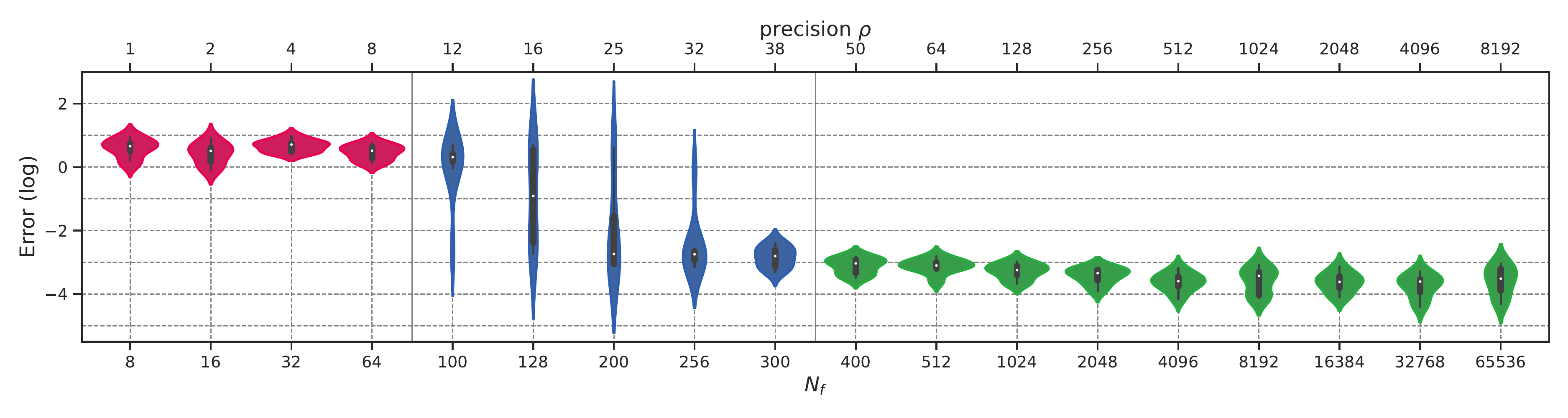}
\caption{Error v/s the number of domain collocation points $N_f$ for the ``\hyperref[{1DLaplace}]{1D Laplace}'' case. A pre-asymptotic regime (pink) is followed by a rapid transition regime (blue), and eventually leading to a permanent regime (green). This transition occurs over a few extra training points.} 
\label{fig:transient}
\end{figure}

Building on the empirical observations, we use the setting in \cite{MolinaroPDEs,MolinaroInverse} to supply a rigorous theoretical background to $h$-analysis. One of the main contributions of this manuscript is the bound on the ``\hyperref[{thm:coupledgeneralization}]{Generalization error for generic PINNs}'', which allows for a simple analysis of the $h$-dependence. \rev{Furthermore, this bound is accompanied by a practical ``\hyperref[{thm:traintest}]{Train-test gap bound}'', supporting regimes detection.}

\rev{To summarize the latter results, a simple yet powerful recipe for any PIML scheme could be: 
\begin{enumerate}
    \item Choose the right model and hyper-parameters to achieve a low training loss;
    \item Use enough training points $N$ to reach the permanent regime (e.g., such that the training and test losses are similar).
\end{enumerate}}

Any practitioner strives to reach the permanent regime for their PIML scheme, and we provide the necessary details for an easy implementation of Horovod-based data acceleration for PINNs, with direct application to any PIML model. Fig.~\ref{fig:DataParallel} (left) further illustrates the scope of data-parallel PIML. \rev{For the sake of clarity, Fig.~\ref{fig:DataParallel} (right) supplies a comprehensive review of important notations defined throughout this manuscript, along with their corresponding introductions.}
\begin{figure}[!htb]
\centering
 \begin{minipage}{.49\textwidth}
\centering
  \includegraphics[width=\textwidth]{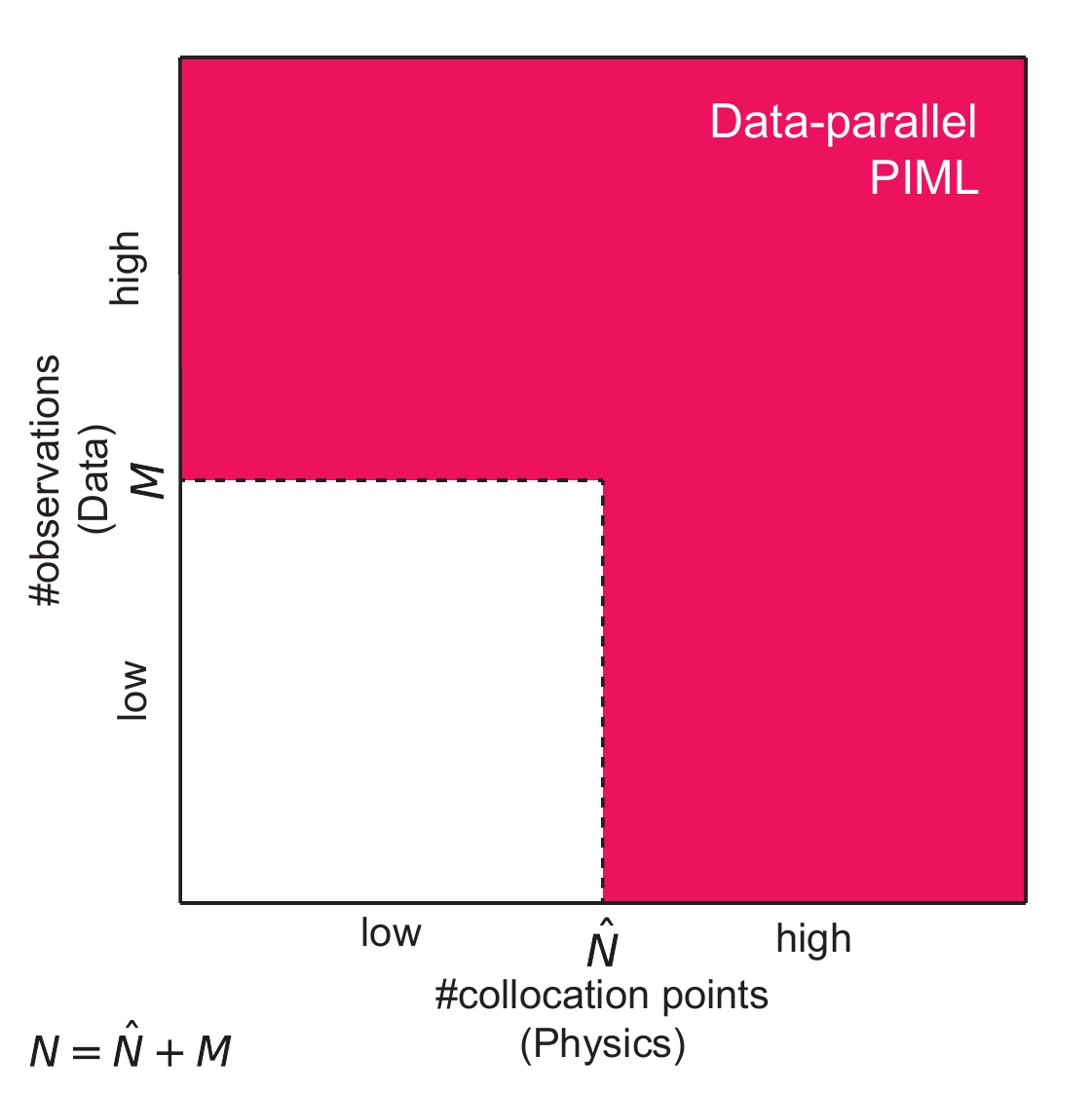}
\end{minipage}
 \begin{minipage}{.49\textwidth}
  \centering
  \vspace{-1cm}
 \begin{table}[H]
\renewcommand\arraystretch{2}
\begin{center}
\footnotesize
\begin{tabular}{
    |>{\centering\arraybackslash}m{1.2cm}
    |>{\centering\arraybackslash}m{2.4cm}
    |>{\centering\arraybackslash}m{1cm}|
    >{\centering\arraybackslash}m{3cm}
    |>{\centering\arraybackslash}m{2cm}|
    }
    \hline   
Notation & Definition & Eq. \\ \hline\hline
$\Lambda$ & $  = \{f,g,\hbar,u\}$ & Eq.~\eqref{eq:lambda} \\ \hline 
$\xi_\nu $ & Residual for $\nu \in \Lambda$ & Eq.~\eqref{eq:PDEresiduals} \\ \hline \hline 
$\hat{N}$ & \# collocation points  &  Eq.~\eqref{eq:MN} \\ \hline 
$M$ & \# observations  &  Eq.~\eqref{eq:MN} \\ \hline 
$N$ & \# training points  &  Eq.~\eqref{eq:MN} \\ \hline \hline 
$\eps_{T,\cdot}$ & Training error &   Eq.~\eqref{eq:training_loss} \\ \hline 
$\eps_{V, \cdot}$ & Testing error &   Eq.~\eqref{eq:training_loss} \\ \hline 
$\eps_{G, \cdot}$ & Generalization error &   Eq.~\eqref{eq:generalizationdef} \\ \hline 
\end{tabular}
\end{center} 
\end{table} 
  \end{minipage}
   \caption{\rev{Left: Scope of of data-parallel PIML. Right: Comprehensive review of important notations defined throughout this manuscript, along with their corresponding introductions.}}
   \label{fig:DataParallel}
\end{figure}

Next, we apply the procedure to increasingly complex problems and demonstrate that Horovod acceleration is straightforward, using the pioneer \href{https://github.com/maziarraissi/PINNs}{PINNs} code of Raissi as an example. Our main practical findings concerning data-parallel PINNs for up to $8$ GPUs are the following :
\begin{itemize}\setlength{\itemsep}{0pt}
\item They do not require to modify their hyper-parameters;
\item They show similar training convergence to the $1$ GPU-case;
\item They lead to high efficiency for both weak and strong scaling (e.g~$E_\text{ff} \rev{> 80\% }$ for Navier-Stokes problem with 8 GPUs).
\end{itemize}

This work is organized as follows: In ``\hyperref[{sec:Problem}]{Problem formulation}'', we introduce the PDEs under consideration, PINNs and convergence estimates for the generalization error. We then move to ``\hyperref[{sec:hvd}]{Data-parallel PINNs}'' and present ``\hyperref[{sec:Numexp}]{Numerical experiments}''. Finally, we close this manuscript in ``\hyperref[{sec:Conclusion}]{Conclusion}''.

\section*{Problem formulation}\label{sec:Problem}
\subsection*{General notation}\label{subsec:Notation}Throughout, vector and matrices are expressed using bold symbols. For a natural number $k$, we set $\IN_k:= \{k,k+1,\cdots\}$. For $p \in \IN_0 = \{0,1,\cdots,\}$, and an open set $D\subseteq \IR^d$ with $d\in \IN_1$, let $L^p(D)$ be the standard class of functions with bounded $L^p$-norm over $D$. Given $s\in \IR^+$, we refer to\cite[Section 2]{steinbach2007numerical} for the definitions of Sobolev function spaces $H^s(D)$. Norms are denoted by $\|\cdot\|$, with subscripts indicating the associated functional spaces. For a finite set $\mT$, we introduce notation $|\mT| := \textup{card}(\mT)$, closed subspaces are denoted by a $\cls$-symbol and $\imath^2=-1$.

\subsection*{Abstract PDE}\label{subsec:AbstractProblem}In this work, we consider a domain $D \subset \IR^d$, $d \in \IN_1$, with boundary $\Gamma = \partial D$. For any $T>0$, $\ID := D\times [0,T]$, we solve a general non-linear PDE of the form:
\be\label{eq:PDE}
\begin{cases}
\mN[u(\bx,t);\blambda] &= f(\bx,t) \quad  (\bx,t) \in  D_f =: D \times [0,T],\\
\mB[u(\bx,t); \blambda] &= g(\bx,t), \quad (\bx ,t) \in D_g =: \Gamma \times [0,T],\\
u (\bx,0)  & = \hbar(\bx), \quad  \bx \in D_{\hbar} =: D,
\end{cases}
\ee 
with $\mN$ a spatio-temporal differential operator, $\mB$ the boundary conditions (BCs) operator, $\blambda$ the material parameters---the latter being unknown for inverse problems---and $u(\bx,t)\in \IR^m$ for any $m\in \IN_1$. Accordingly, for any function $\hat{u}$ defined over $\ID$, we introduce
\be \label{eq:lambda}
\rev{\Lambda :=\{f,g,\hbar,u\}}
\ee 
and define the residuals $\xi_v$ for each $v \in \Lambda$ and any observation function $u_\textup{obs}$:
\be\label{eq:PDEresiduals}
\begin{cases}
\xi_f(\bx,t;\blambda)& : = \mN[\hat{u}(\bx,t);\blambda] - f(\bx,t) \quad\textup{in} \quad  D_f,\\
\xi_g (\bx,t;\blambda)& : = \mB[\hat{u}(\bx,t);\blambda]- g(\bx,t) \quad \textup{in} \quad D_g,\\
\xi_\hbar (\bx,0) & : =\hat{u} (\bx,0)  - \hbar(\bx) \quad   \textup{in} \quad D_\hbar, \\
\xi_{u} (\bx,t) & : = \hat{u}(\bx,t) - u_\textup{obs}(\bx,t)\quad   \textup{in} \quad D_u := \ID.
\end{cases}
\ee 

\subsection*{PINNs}Following\cite{escapil2022hyper,lu2021deepxde}, let $\sigma $ be a smooth activation function. Given an input $(\bx,t) \in \IR^{d+1}$, we define $\mNN_\theta$ as being a $L$-layer neural feed-forward neural network with $W_0= d+ 1$, $W_L = m$ and $W_l$ neurons in the $l$-th layer for $1 \leq l \leq L-1$. For constant width DNNs, we set $W=W_1=\cdots = W_{L-1}$. For $1 \leq l \leq L$, let us denote the weight matrix and bias vector in the $l$-th layer by $\bW^l \in \IR^{d_l \times d_{l-1}}$ and $\bb^l \in \IR^{d_l}$, respectively, resulting in:
\be \label{eq:DNN}
\begin{array}{rll}
\text{input layer:} \quad & (\bx,t)\in \IR^{d+1},\\
\text{hidden layers:} \quad & \bz^l (\bx) = \sigma ( \bW^l \bz^{l-1} (\bx) + \bb^l) \in \IR^{d_l} & \quad \text{for} \quad 1 \leq l \leq L-1,\\
\text{output layer:} \quad & \bz^L(\bx) = \bW^L \bz^{L-1} (\bx) + \bb^L  \in \IR^m. 
\end{array}
\ee
This results in representation $\bz^L(\bx,t)$, with 
\be\label{eq:weights}\theta:= \left\{ (\bW^1, \bb^1), \cdots, (\bW^L, \bb^L)\right\},
\ee 
the (trainable) parameters---or weights---in the network. We set $\Theta = \IR^{|\Theta|}$. Application of PINNs to Eq.~\eqref{eq:PDE} yields the approximate $u_\theta(\bx,t) = \bz^L(\bx,t)$. 

We introduce the training dataset $\mT_v:=\{\tau_v^i\}_{i=1}^{N_v}$, $\tau_v^i \in D_v$, $N_v \in \IN$ for $i = 1,\cdots, N_v, v \in \Lambda$ and observations $u_\text{obs}(\tau_u^i)$, $i=1,\cdots,N_u$. Furthermore, to each training point $\tau_v^i$ we associate a quadrature weight $w_v^i>0$. All throughout this manuscript, we set:
\be\label{eq:MN}
M:= N_u, \quad \hN := N_f + N_g + N_\hbar \quad \text{and} \quad  N := \hN + M.
\ee
Note that $M$ (resp.~$\hN$) represents the amount of information for the data-driven (resp.~physics) part, by virtue of the PIML paradigm (refer to Fig.~\ref{fig:DataParallel}). The network weights $\theta$ in Eq.~\eqref{eq:weights} are trained (e.g.,~via ADAM optimizer\cite{kingma2014adam}) by minimizing the weighted loss:
\be
\label{eq:loss}
\mL_\theta: = \sum_{v\in \Lambda} \upomega_v \mL_\theta^v,\quad\textup{wherein}\quad \mL_\theta^v :=\sum_{i=1}^{N_v} w_v^i|\xi_{v,\theta}(\tau^i_v)|^2 \quad \textup{and} \quad \upomega_v >0 \quad \textup{for} \quad v \in \Lambda.
\ee 
We seek at obtaining:
\be\label{eq:minimizer}
\theta^\star := \argmin_{\theta \in \Theta } (\mL_\theta).
\ee 
The formulation for PINNs addresses the cases with no data (i.e.~$M=0$) or physics (i.e.~$\hN=0$), thus exemplifying the PIML paradigm. Furthermore, it is able to handle time-independent operators with only minor changes; a schematic representation of a forward time-independent PINN is shown in Fig.~\ref{picture:PINNs}.
\begin{figure}[!htb]
\centering
\includegraphics[width=.65\textwidth]{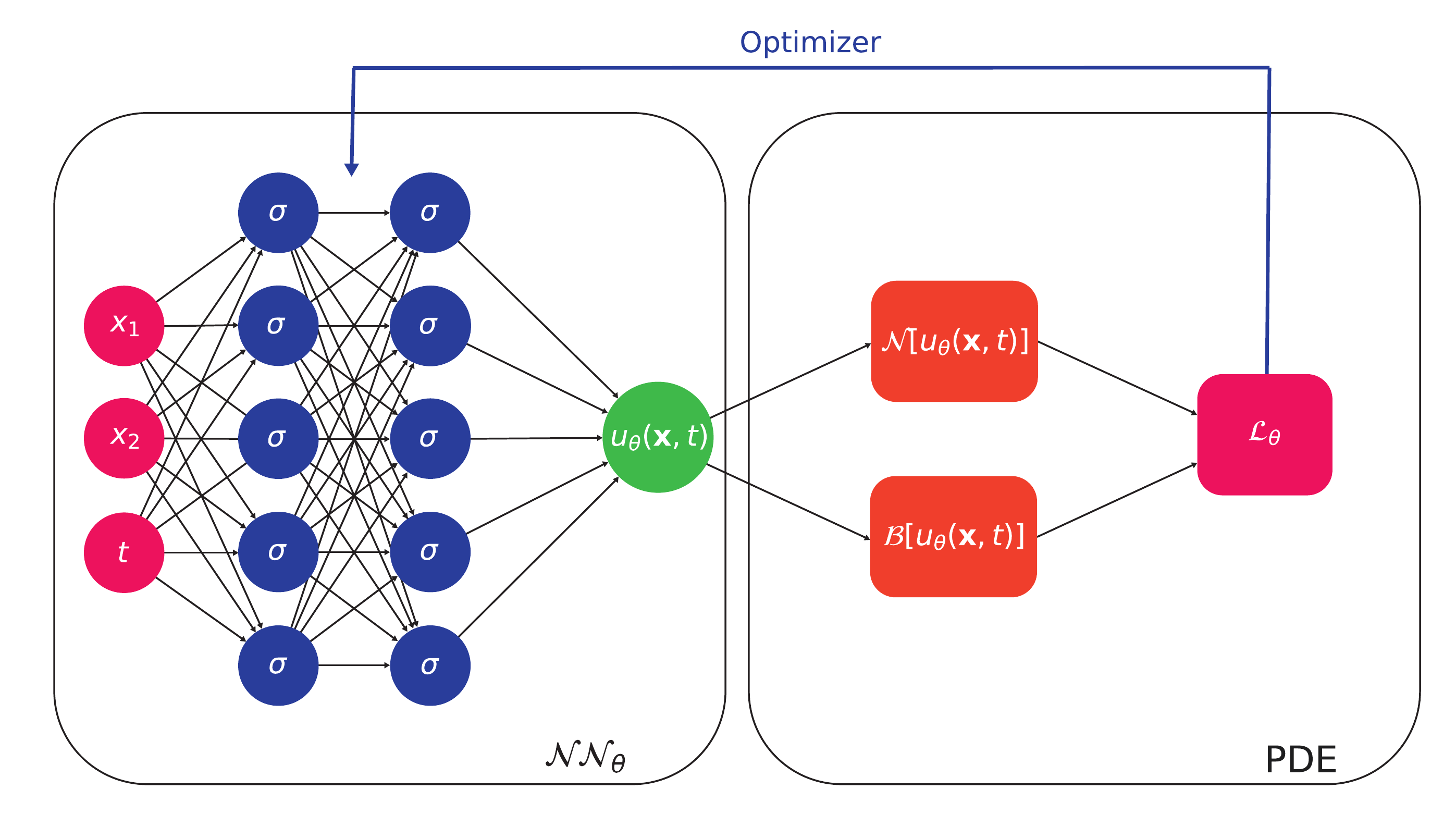}
\caption{Schematic representation of a PINN. A DNN with $L=3$ (i.e. $L-1 = 2$ hidden layers) and $W=5$ learns the mapping $\bx \mapsto u(\bx\rev{,t})$. The PDE is taken into account throughout the residual $\mL_\theta$, and the trainable weights are optimized, leading to optimal $\theta^\star$.}
\label{picture:PINNs}
\end{figure}
Our setting assumes that the material parameters $\blambda$ are known. If $M>0$, one can solve the inverse problem by seeking:
\be 
(\theta^\star_\text{inverse},\blambda^\star_\text{inverse}) : = \argmin_{\theta \in \Theta, \blambda} \mL_\theta[\blambda].
\ee
Similarly, unique continuation problems\cite{MolinaroInverse}, which assume incomplete information for $f,g$ and $\hbar$, are solved throughout PINNs without changes. Indeed, ``\hyperref[{NavierStokes}]{2D Navier-Stokes}'' combines unique continuation problem and unknown parameters $\lambda_1,\lambda_2$.

\subsection*{Automatic Differentiation}We aim at giving further details about back-propagation algorithms and their dual role in the context of PINNs:
\begin{enumerate}
\item Training the DNN by calculating $\frac{\partial \mL_\theta}{\partial \theta}$;
\item Evaluating the partial derivatives in $\mN [u_\theta(\bx,t);\blambda]$ and $\mB[u_\theta(\bx,t);\blambda]$ so as to compute the loss $\mL_\theta$.
\end{enumerate}
They consist in a forward pass to evaluate the output $u_\theta$ (and $\mL_\theta$), and a backward pass to assess the derivatives. To further elucidate back-propagation, we reproduce the informative diagram from\cite{baydin2018automatic} in Fig.~\ref{fig:AD}.
\begin{figure}[!htb]
\center
\includegraphics[width=.65\textwidth]{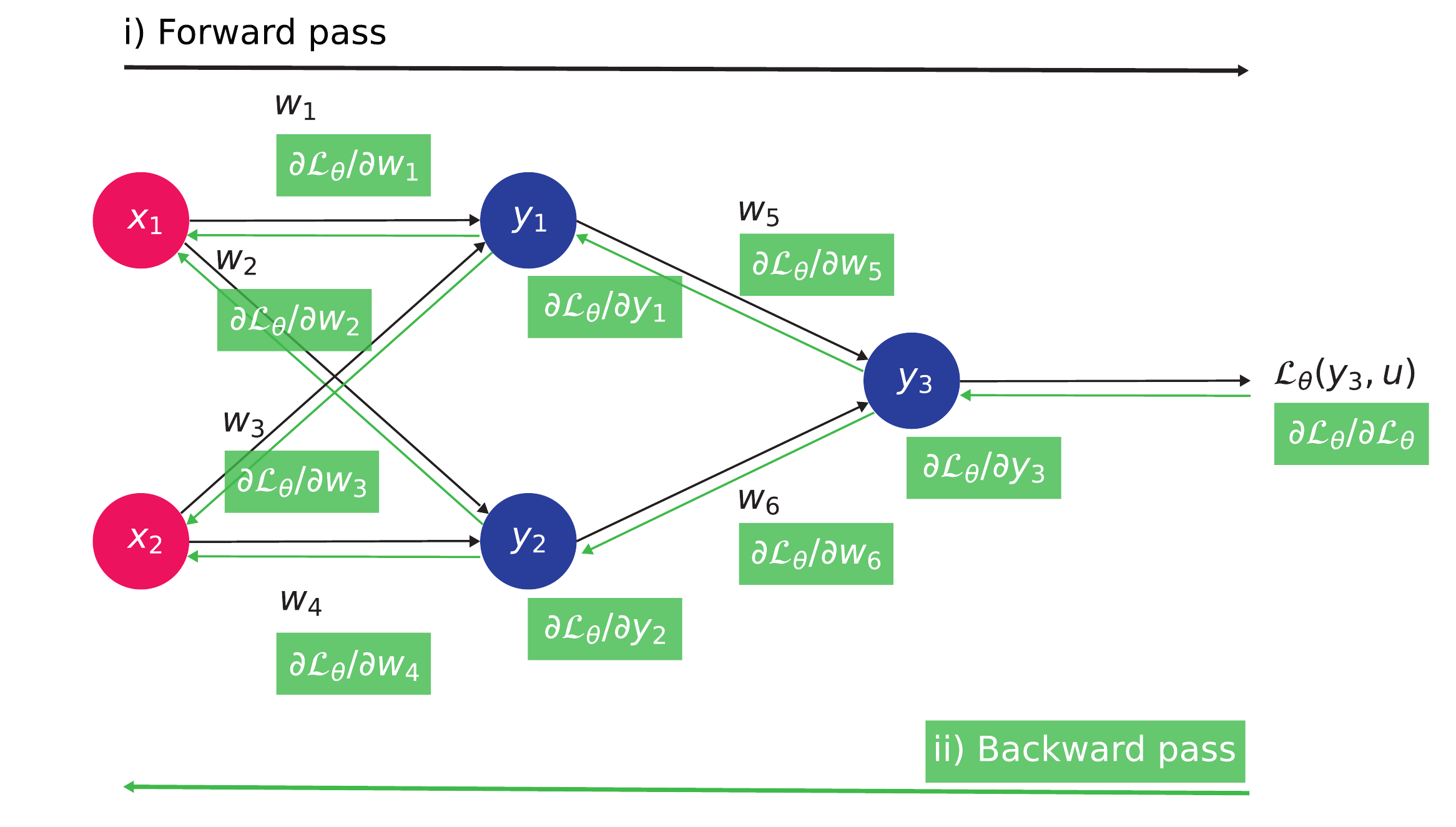}
\caption{Overview of back-propagation. A forward pass generates activations $y_i$ and computes the error $\mL_\theta(y_3,u)$. This is followed by a backward pass, through which the error adjoint is propagated to obtain the gradient with respect to weights $\nabla \mL_\theta$ where $\theta=(\mathrm{w}_1,\cdots,\mathrm{w}_6)$. Additionally, \rev{spatio-temporal partial derivatives} can be computed in the same backward pass.}
\label{fig:AD}
\end{figure}
TensorFlow includes \href{https://www.tensorflow.org/guide/autodiff}{reverse mode AD} by default. Its cost is bounded with $|\Theta|$ for scalar output NNs (i.e.~for $ m=1$). The application of back-propagation (and reverse mode AD in particular) to any training point is independent of other information, such as neighboring points or the volume of training data. This allows for data-parallel PINNs. Before detailing its implementation, we justify the $h$-analysis through an abstract theoretical background.

\subsection*{Convergence estimates}To understand better how PINNs scale with $N$, we follow the method in \cite{MolinaroPDEs,MolinaroInverse} under a simple setting, allowing to control the data and physics counterparts in PIML. Set $s \geq 0$ and define spaces: 
\be 
\hat{Y} \cls Y^\star  \cls Y = L^2(\ID,\IR^m)  \quad \text{and} \quad \hat{X}  \cls X^\star  \cls X = H^s(\ID,\IR^m)
\ee
We assume that Eq.~\eqref{eq:PDE} can be recast as:
\begin{align}
\label{eq:abstract}
\fA  u&  = b \quad \textup{with} \quad \fA : X^\star \to Y^\star \quad \textup{and} \quad b \in Y^\star,\\
u&  = u_\text{obs} \quad \textup{in}\quad  X^\star.
\end{align}
We suppose that Eq.~\eqref{eq:abstract} is well-posed and that for any $u,v \in \hat{X}$, there holds that:
\be\label{eq:stability}
\|u - v\|_{Y} \leq  \Cpde (\|u\|_{\hat{X}}, \|v\|_{\hat{X}}) \left( \|\fA u- \fA v\|_Y\right). 
\ee
Eq.~\eqref{eq:abstract} is a stability estimate, allowing to control the total error by means of a bound on PINNs residual. Residuals in Eq.~\eqref{eq:PDEresiduals} are:
\be\label{eq:PDEresidualsAbstract}
\begin{cases}
\xi_D& : = \fA u - b \quad \text{in}\quad Y^\star,\\
\xi_{u} &  = u - u_\textup{obs}\quad \text{in}\quad X^\star .
\end{cases}
\ee
From the expression of residuals, we are interested in approximating integrals:
\be \label{eq:functionApproximation} 
\overline{g} = \int_\ID g(y) dy \quad \text{and} \quad  \overline{l} = \int_\ID l(z) dz \quad \text{for} \quad g \in \hat{Y}, l \in \hat{X}.
\ee
We assume that we are provided quadratures:
\be 
\overline{g}_{\hN} = \sum_{i=1}^{\hN} \rev{w^i_D} g(\rev{\tau^i_D}) \quad \text{and} \quad \overline{l}_M = \sum_{i=1}^M \rev{w^i_u} l(\rev{\tau^i_u})
\ee
for weights $\rev{w^i_D},\rev{w^i_u}$ and quadrature points $\rev{\tau_D^i}, \rev{\tau_u^i} \in \ID$ such that for $\alpha,\beta > 0$:
\be \label{eq:quad_rate}
|\overline{g} - {\overline{g}}_{\hN} |\leq C_{\textup{quad},Y} {\hN}^{-\alpha} \quad \text{and} \quad  |\overline{l} - \overline{l}_M | \leq C_{\textup{quad},X} M^{-\beta}.
\ee
For any $\upomega_u>0$, the loss is defined as follows: 
\be
\begin{split}
\label{eq:training_loss}
\mL_\theta &= \sum_{i=1}^{\hN} \rev{w^i_D} |\xi_{D,\theta} (\rev{\tau^i_D})|^2 +  \upomega_u \sum_{i=1}^M \rev{w_u^i} |\xi_{u,\theta} (\rev{\tau_u^i})|^2  \approx \|\xi_{D,\theta}\|_{Y}^2 + \upomega_u \|\xi_{u,\theta} \|_{X}^2\\
& =  \eps_{T,D}^2 + \upomega_u\eps_{T,u}^2,
\end{split}
\ee
with $\eps_{T,D}$ and $\eps_{T,u}$ the training error for collocation points and observations respectively.

Notice that application of Eq.~\eqref{eq:quad_rate} to $\xi_{D,\theta}$ and $\xi_{u,\theta}$ yields:
\be \label{eq:quadratures1}
|\|\xi_{D,\theta}\|^2_Y -   \eps_{T,D}^2 | \leq  C_{\textup{quad},Y}\hN^{-\alpha}\quad \text{and} \quad |\|\xi_{u,\theta}\|_X^2 - \eps_{T,u}^2 | \leq  C_{\textup{quad},X}M^{-\beta }.
\ee
We seek to quantify the \emph{generalization error}:
\be\label{eq:generalizationdef}
\eps_G = \eps_G(\theta^\star): = \|u - u^\star \|_X \quad \textup{with } u^\star :=  u_{\theta^\star} \quad \text{and}\quad \theta^\star := \argmin_\theta \mL_\theta .
\ee
We detail a new result concerning the generalization error for PINNs.
\begin{theorem}[Generalization error for generic PINNs]\label{thm:coupledgeneralization}
Under the presented setting, there holds that:
\be \label{eq:theoremgeneralization}
\eps_G \leq \frac{\Cpde}{1+\upomega_u}  \left(\eps_{T,D} + C_{\textup{quad},Y}^{1/2} \hN^{-\alpha/2}\right) + \frac{\upomega_u}{1+\upomega_u}  \left( \eps_{T,u} +  C_{\textup{quad},X}^{1/2} M^{-\beta/2}  + \hat{\mu} \right)
\ee
with $\hat{\mu}: = \|u-u_\textup{obs}\|_X$.
\end{theorem}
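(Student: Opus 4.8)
The plan is to derive two \emph{independent} upper bounds on the generalization error $\eps_G = \|u - u^\star\|_X$---one driven by the physics residual and one by the data residual---and then to splice them together through a convex combination whose weights are exactly $\frac{1}{1+\upomega_u}$ and $\frac{\upomega_u}{1+\upomega_u}$. The first ingredient is that each training error already controls the corresponding continuous residual norm: applying the quadrature estimates Eq.~\eqref{eq:quadratures1} together with the elementary inequality $\sqrt{a^2+b^2}\le a+b$ for $a,b\ge 0$ gives
\be
\|\xi_{D,\theta^\star}\|_Y \le \eps_{T,D} + C_{\textup{quad},Y}^{1/2}\hN^{-\alpha/2}
\quad\textup{and}\quad
\|\xi_{u,\theta^\star}\|_X \le \eps_{T,u} + C_{\textup{quad},X}^{1/2}M^{-\beta/2}.
\ee
These two bounds already expose the $h$-dependence (through $\hN^{-\alpha/2}$ and $M^{-\beta/2}$) that the theorem is designed to reveal.

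Next I would produce the two standalone estimates on $\eps_G$. For the \emph{physics bound}, note that at the minimizer the PDE residual is $\xi_{D,\theta^\star} = \fA u^\star - b = \fA u^\star - \fA u$, since $b = \fA u$ by well-posedness of Eq.~\eqref{eq:abstract}. Inserting this into the stability estimate Eq.~\eqref{eq:stability} yields $\eps_G \le \Cpde\,\|\fA u^\star - \fA u\|_Y = \Cpde\,\|\xi_{D,\theta^\star}\|_Y$, and combining with the residual bound above gives $\eps_G \le \Cpde\left(\eps_{T,D} + C_{\textup{quad},Y}^{1/2}\hN^{-\alpha/2}\right)$. For the \emph{data bound}, I would use only a triangle inequality around the observation: since $\xi_{u,\theta^\star} = u^\star - u_\textup{obs}$, we have $\eps_G = \|u - u^\star\|_X \le \|u - u_\textup{obs}\|_X + \|u_\textup{obs} - u^\star\|_X = \hat{\mu} + \|\xi_{u,\theta^\star}\|_X$, which with the residual bound gives $\eps_G \le \eps_{T,u} + C_{\textup{quad},X}^{1/2}M^{-\beta/2} + \hat{\mu}$.

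The final step is the convex-combination identity $\eps_G = \frac{1}{1+\upomega_u}\,\eps_G + \frac{\upomega_u}{1+\upomega_u}\,\eps_G$: substituting the physics bound into the first copy and the data bound into the second produces exactly Eq.~\eqref{eq:theoremgeneralization}, with $\Cpde$ appearing only in the physics term as stated. The one delicate point worth flagging is the physics bound: the stability estimate Eq.~\eqref{eq:stability} controls an error measured in $Y$, whereas $\eps_G$ is measured in the stronger norm $X = H^s$ (and $\|\cdot\|_Y \le \|\cdot\|_X$ only goes the wrong way). Making this rigorous requires reading the stability estimate with its left-hand norm matched to the generalization norm---as the surrounding discussion ``control the total error'' indicates---and checking that $\Cpde(\|u\|_{\hat{X}},\|u^\star\|_{\hat{X}})$ is finite, which is where the regularity hierarchy $\hat{X}\cls X^\star$ and the well-posedness of Eq.~\eqref{eq:abstract} are used. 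Everything else reduces to the triangle inequality and the quadrature bounds.
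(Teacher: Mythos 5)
Your proposal is correct and is essentially the paper's own proof, just reorganized: the paper writes $(1+\upomega_u)\eps_G = \eps_G + \upomega_u\,\eps_G$ and bounds the first copy via the stability estimate applied to $\xi_{D,\theta^\star} = \fA u^\star - \fA u$ and the second via the triangle inequality through $u_\textup{obs}$, then invokes the quadrature bounds---exactly your two standalone estimates spliced by the convex-combination identity, which is the same algebra divided by $1+\upomega_u$. The norm-mismatch caveat you flag (stability stated with $\|\cdot\|_Y$ on the left versus $\eps_G$ measured in $\|\cdot\|_X$) is a genuine subtlety, but it is present in the paper's own proof as well, which applies Eq.~\eqref{eq:stability} directly to $\|u-u^\star\|_X$ without comment.
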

\begin{proof}Consider the setting of Theorem~\ref{thm:coupledgeneralization}. There holds that:
\begin{align*}
(1 + \upomega_u)\eps_G  &= \|u - u^\star \|_X + \upomega_u \|u - u^\star \|_X, \quad  \textup{by Eq.~\eqref{eq:generalizationdef}} \\
& \leq \Cpde \|\fA u - \fA u^\star\|_Y + \upomega_u  \|u-u^\star\|_X, \quad  \textup{by Eq.~\eqref{eq:stability}}\\
& \leq \Cpde \|\xi_{D,\theta^\star}\|_Y +\upomega_u \|u - u_\textup{obs}\|_X  + \upomega_u \|u_\textup{obs}- u^\star\|_X , \quad  \textup{by Eq.~\eqref{eq:PDEresidualsAbstract} and triangular inequality}\\ 
& = \Cpde \|\xi_{D,\theta^\star}\|_Y + \upomega_{u} \|\xi_{u,\theta^\star}\|_Y + \upomega_u\hat{\mu}, \quad \textup{by definition}\\
& \leq \Cpde\eps_{T,D} + \upomega_u \eps_{T,u} +  \Cpde C_{\textup{quad},Y}^{1/2} \hN^{-\alpha/2} + \upomega_u C_{\textup{quad},X}^{1/2} M^{-\beta/2}  + \upomega_u \hat{\mu}, \quad \textup{by Eq.~\eqref{eq:quadratures1}}.
\end{align*}
\end{proof}

The novelty of Theorem \ref{thm:coupledgeneralization} is that it describes the generalization error for a simple case involving collocation points and observations. \rev{It states that the PINN generalizes well as long as the training error is low and that sufficient training points are used. }To make the result more intuitive, we rewrite Eq.~\eqref{eq:theoremgeneralization}, with $\sim$ expressing the terms up to positive constants:
\be 
\eps_G \sim (\hN^{-\alpha/2} + M^{-\beta/2}) + \eps_{T,D} + \eps_{T,u} + \hat{\mu} .
\ee
The generalization error depends on the training errors (which are tractable during training), parameters $\hN$ and $M$ and bias $\hat{\mu}$.

To return to $h$-analysis, we now have a theoretical proof of the three regimes presented in \rev{``\hyperref[{sec:Introduction}]{Introduction}''}. Let us assume that $\hat{\mu}=0$. For small values of $\hN$ or $M$, the bound in Theorem~\ref{thm:coupledgeneralization} is too high to yield a meaningful estimate. Subsequently, the convergence is as $\max(\hN^{-\alpha/2},M^{-\beta/2})$, marking the transition regime. It is paramount for practitioners to reach the permanent regime when training PINNs, giving ground to data-parallel PINNs.

\rev{In general applications, the exact solution $u$ is not available. Moreover, it is relevant to determine whether $N$ is large enough. To this extent, we introduce a same cardinality testing (or validation) set. }\rev{Interestingly, the entire analysis above and Theorem \ref{thm:coupledgeneralization} remain valid for another set of testing points, with the testing error $\eps_{V,D}$ and $\eps_{V,u}$ set as in Eq.~\eqref{eq:training_loss}. The train-test gap, which is tractable, can be quantified as follows.}
\rev{
\begin{theorem}[Train-test gap bound]\label{thm:traintest}Under the presented setting, there holds that:
$$
|\eps_{T,D} - \eps_{V,D}| \leq 2 C_{\textup{quad},Y}^{1/2} \hat{N}^{-\alpha /2} \quad \text{and} \quad |\eps_{T,u} - \eps_{V,u}| \leq 2 C_{\textup{quad},X}^{1/2} M^{-\beta /2} .
$$
\end{theorem}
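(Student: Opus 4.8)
The plan is to use the integrals $\|\xi_{D,\theta^\star}\|_Y^2$ and $\|\xi_{u,\theta^\star}\|_X^2$ as a common pivot linking the training and testing errors. As the excerpt emphasises, the testing set is taken with the same cardinality as the training set, so applying the quadrature rate Eq.~\eqref{eq:quad_rate} to $\xi_{D,\theta^\star}$ and $\xi_{u,\theta^\star}$ over the testing points yields the exact analogue of Eq.~\eqref{eq:quadratures1},
$$|\,\|\xi_{D,\theta^\star}\|_Y^2 - \eps_{V,D}^2\,| \leq C_{\textup{quad},Y}\,\hN^{-\alpha} \quad\text{and}\quad |\,\|\xi_{u,\theta^\star}\|_X^2 - \eps_{V,u}^2\,| \leq C_{\textup{quad},X}\,M^{-\beta}.$$
The point is that the training and testing quadratures approximate the \emph{same} underlying integrals, so each residual norm sits between the two corresponding errors and can be eliminated by a triangle inequality.

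Concretely, I would treat the $D$-component first. Setting $a := \|\xi_{D,\theta^\star}\|_Y$, $b := \eps_{T,D}$, $c := \eps_{V,D}$ and $\delta := C_{\textup{quad},Y}^{1/2}\hN^{-\alpha/2}$, Eq.~\eqref{eq:quadratures1} together with its testing analogue gives $|a^2 - b^2| \leq \delta^2$ and $|a^2 - c^2| \leq \delta^2$. It then only remains to combine these into $|b - c| \leq 2\delta$.

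The one real obstacle is that the quadrature estimates control the \emph{squared} errors, whereas the statement concerns the errors themselves; the naive factorisation $|a^2-b^2| = |a-b|(a+b)$ is useless when $a+b$ is small. I would circumvent this with the elementary bound $|a-b| \leq \sqrt{|a^2-b^2|}$, valid for all $a,b \geq 0$ since $|a-b|^2 \leq |a-b|\,(a+b) = |a^2 - b^2|$. This yields $|a-b| \leq \delta$ and $|a-c| \leq \delta$ simultaneously, and the triangle inequality then gives $|\eps_{T,D} - \eps_{V,D}| = |b-c| \leq 2\delta = 2\,C_{\textup{quad},Y}^{1/2}\hN^{-\alpha/2}$, which is the first claim. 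The second inequality is identical after replacing $(Y,\hN,\alpha)$ by $(X,M,\beta)$, so I expect no additional difficulty there.
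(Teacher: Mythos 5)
Your proof is correct and takes essentially the same route as the paper's: both arguments pivot on the exact residual norms $\|\xi_{D,\theta^\star}\|_Y$, $\|\xi_{u,\theta^\star}\|_X$ as the common quantity approximated by the training and testing quadratures, and conclude by the triangle inequality. If anything, yours is the more careful rendition, since the paper's displayed proof mixes squared and unsquared quantities (comparing $\eps_{T,v}$ with $\|\xi_{v,\theta^\star}\|^2$) and silently passes from the bound on the difference of squares to one on the errors themselves --- exactly the step your elementary inequality $|a-b|\leq\sqrt{|a^2-b^2|}$ for $a,b\geq 0$ makes rigorous.
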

}
\rev{
\begin{proof}Consider the setting of Theorem~\ref{thm:traintest}. For $v \in \{D, u\}$ and $\cdot \in \{ Y,X\}$ there holds that:
\begin{align*}
|\eps_{T,v} - \eps_{V,v}|  &\leq \left| \eps_{T,v} - \|\xi_{v,\theta^\star}\|^2_{\cdot} \right| + \left|  \eps_{V,v}- \|\xi_{v,\theta^\star}\|^2_{\cdot} \right| , \quad  \textup{by triangular inequality} \\
& \leq   2 C_{\textup{quad}, \cdot }^{1/2} N_v^{-\alpha /2}, \quad \textup{by Eq.~\eqref{eq:quadratures1}}.
\end{align*}
\end{proof}}
\rev{The bound in Theorem \ref{thm:coupledgeneralization} is valuable as it allows to assess the quadrature error convergence---and the regime---with respect to the number of training points. }
\section*{Data-parallel PINNs}\label{sec:hvd}
\subsection*{Data-distribution and Horovod}In this section, we present the data-parallel distribution for PINNs. Let us set $\size \in \IN_1$ and define ranks (or workers):
$$
\rank = 0 , \cdots, \size-1,
$$
each rank corresponding generally to a GPU. Data-parallel distribution requires the appropriate partitioning of the training points across ranks.

We introduce $\hN_1,M_1\in \IN_1$ collocation points and observations, respectively, for each $\rank$ (e.g., a GPU) yielding:
$$\mT_v = \bigcup_{\rank =0}^{\size-1} \mT_v^\rank \quad \textup{for}\quad v \in \{D,u\}, $$
with
\be 
\hN = \size \times \hN_{1}, \quad M = \size \times M_1 \quad \text{and}\quad  \mT = \mT_{\rev{D}} \cup \mT_u \quad \text{with} \quad N = \hN + M.
\ee 

Data-parallel approach is as follows: We send the same synchronized copy of the DNN $\mN\!\!\mN_\theta$ defined in Eq.~\eqref{eq:DNN} to each rank. Each rank evaluates the loss $\mL^\rank_\theta$ and the gradient $\nabla_\theta \mL^\rank_\theta$. The gradients are then averaged using an all-reduce operation, such as the ring all-reduce implemented in Horovod\cite{sergeev2018horovod,patarasuk2009bandwidth}, which is known to be \rev{bandwith} optimal with respect to the number of ranks\cite{patarasuk2009bandwidth}. The process is illustrated in Figure \ref{fig:allReduce} for $\size =4$. The ring-allreduce algorithm involves each of the $\size$ nodes communicating with two of its peers $2 \times (\size -1)$ times \cite{sergeev2018horovod}. 

\tikzset{every picture/.style={line width=0.75pt}}
\begin{figure}[!htb]
\centering
\includegraphics[width=.4\textwidth]{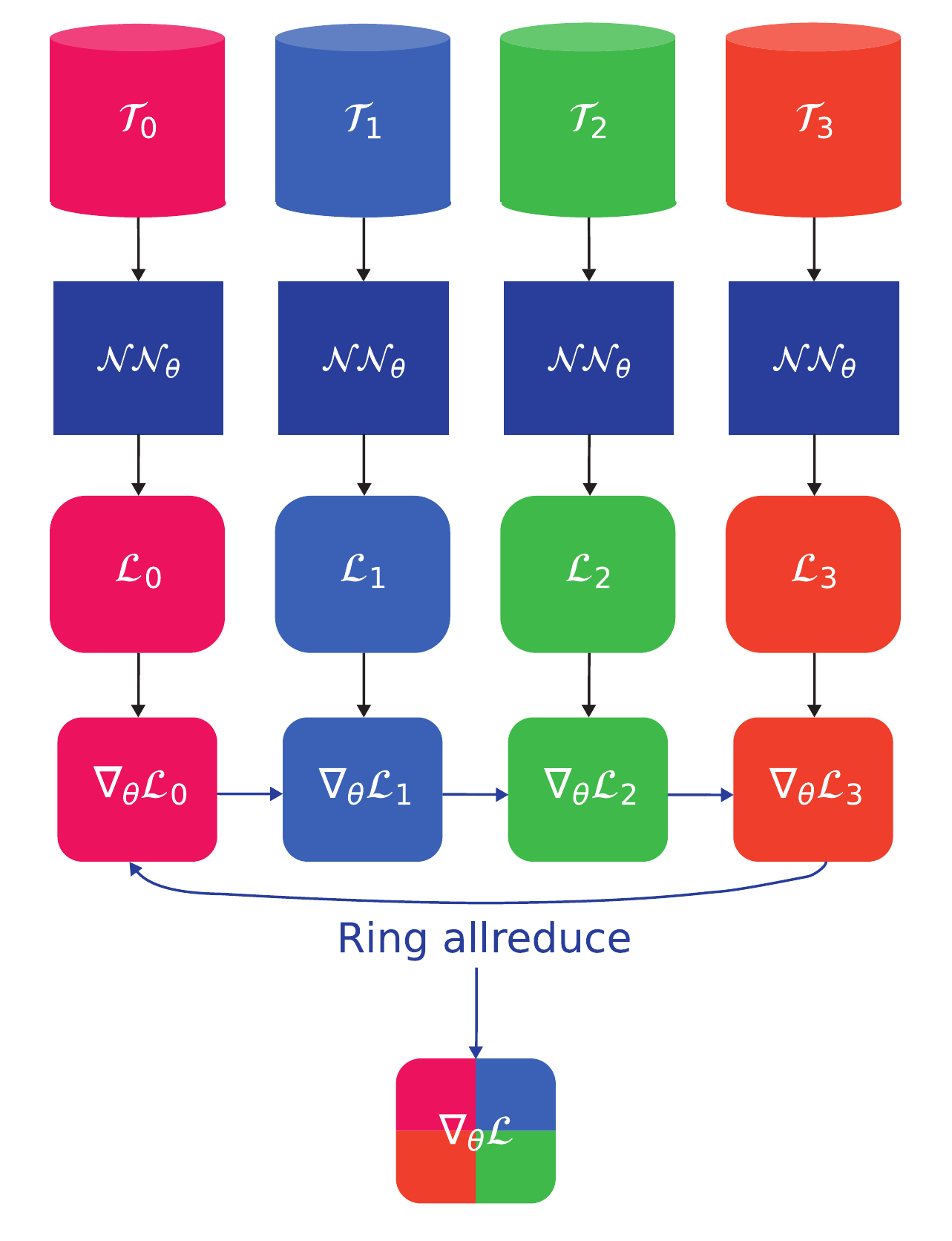}
\caption{Data-parallel framework. Horovod supports ring-allreduce algorithm.}
\label{fig:allReduce}
\end{figure}

It is noteworthy to observe that data generation for data-free PINNs (i.e.~with $M=0$) requires no modification to existing codes, provided that each rank has a different seed for random or pseudo-random sampling. Horovod allows to apply data-parallel acceleration with minimal changes to existing code. Moreover, our approach and Horovod can easily be extended to multiple computing nodes. As pointed out in ``\hyperref[{sec:Introduction}]{Introduction}'', \rev{Horovod} supports popular DL backends such as TensorFlow, PyTorch and Keras. In Listing \ref{lst:tensor},  we demonstrate how to integrate data-parallel distribution using Horovod with a generic PINNs implementation in TensorFlow 1.x.  The highlighted changes in pink show the steps for incorporating Horovod, which include: (i) initializing Horovod; (ii) pinning available GPUs to specific workers; (iii) wrapping the Horovod distributed optimizer and (iv) broadcasting initial variables to the master rank being $\rank = 0$.

{\centering
\begin{minipage}{.75\linewidth}
\begin{lstlisting}[language=Python, caption=Horovod for PINNs with TensorFlow 1.x. Data-parallel Horovod PINNs require minor changes to existing code., label=lst:tensor]
# Initialize Horovod
<@\textcolor{mypink}{import horovod.tensorflow as hvd}@>
<@\textcolor{mypink}{hvd.init()}@>

# Pin GPU to be used to process local rank (one GPU per process)
config = tf.ConfigProto()
config.gpu_options.visible_device_list = str(<@\textcolor{mypink}{hvd.local\_rank()}@>)

# Build the PINN
loss = ...
opt = tf.train.AdamOptimizer()
# Add Horovod Distributed Optimizer
<@\textcolor{mypink}{opt = hvd.DistributedOptimizer(opt)}@>

train = opt.minimize(loss)

# Initialize variables
init = tf.global_variables_initializer()
self.sess.run(init)

# Broadcast variables from rank 0 to other workers
<@\textcolor{mypink}{bcast = hvd.broadcast\_global\_variables(0)}@>
<@\textcolor{mypink}{self.sess.run(bcast)}@>

# Train the model
while n <= maxiter:
    sess.run(train)    
    n += 1
\end{lstlisting}
\end{minipage}\par 
}

\subsection*{Weak and strong scaling}Two key concepts in data distribution paradigms are weak and strong scaling, which can be explained as follows: Weak scaling involves increasing the problem size proportionally with the number of processors, while strong scaling involves keeping the problem size fixed and increasing the number of processors. To reformulate:
\begin{itemize}
\setlength{\itemsep}{0pt}
\item Weak scaling: Each worker has $(\hN_1,M_1)$ training points, and we increase the number of workers $\size$;
\item Strong scaling: We set a fixed total number of $(\hN_1,M_1)$ training points, and we split the data over increasing $\size$ workers. 
\end{itemize}
We portray weak and strong scaling in Fig.~\ref{fig:scalings} for a data-free PINN with $\hN_1=16$. Each box represents a GPU, with the number of collocation points as a color. On the left of each scaling option, we present the unaccelerated case. \rev{Finally, we} introduce the training time $t_\size$ for $\size$ workers. This allows to define the efficiency and speed-up as: 
$$
E_{\text{ff}}:= \frac{t_1}{t_\size}\quad \textup{and} \quad S_{\text{up}} := \size \frac{t_1}{t_\size}.
$$
\begin{figure}[!htb]
\center
\includegraphics[width=\textwidth]{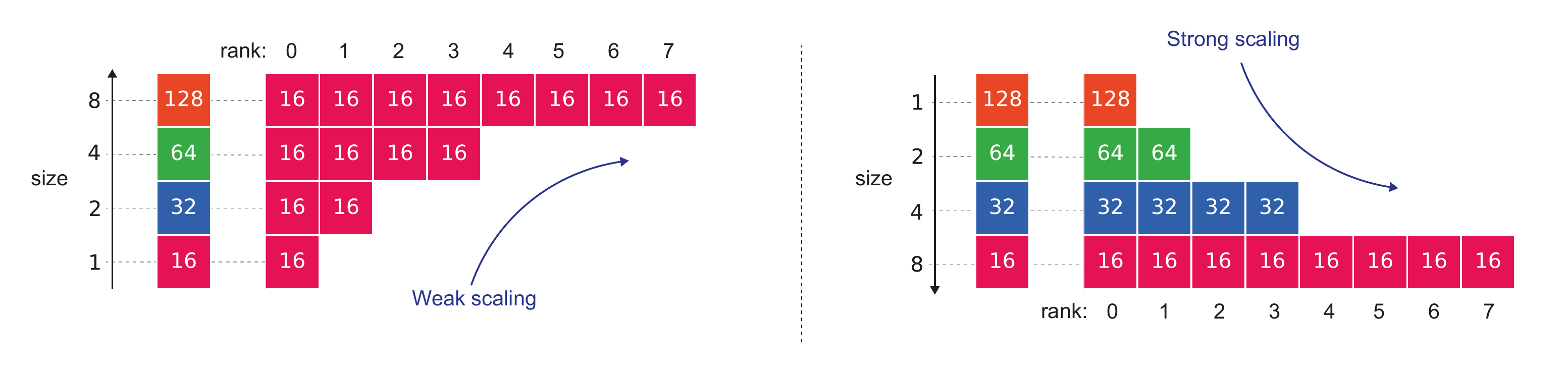}
\caption{Weak and strong scaling for $\hN_1=16$ and $\size=8$.}
\label{fig:scalings}
\end{figure}

\section*{Numerical experiments}\label{sec:Numexp}Throughout, we apply our proceeding to three cases of interest:
\begin{itemize}\setlength{\itemsep}{0pt}
\item ``\hyperref[{1DLaplace}]{1D Laplace}'' equation (forward problem);
\item ``\hyperref[{Schrodinger}]{1D Schrödinger}'' equation (forward problem);
\item ``\hyperref[{NavierStokes}]{2D Navier-Stokes}'' equation (inverse problem).
\end{itemize}
For each case, we perform a $h$-analysis followed by Horovod data-parallel acceleration, which is applied to the domain training points (and observations for the Navier-Stokes case). Boundary loss terms are negligible due to the sufficient number of boundary data points.
\subsection*{Methodology}
We perform simulations in single float precision on a AMAX DL-E48A AMD Rome EPYC server with 8 Quadro RTX 8000 Nvidia GPUs---each one with a 48 GB memory. We use a \href{https://hub.docker.com/r/pescapil/dist-training-horovod}{Docker image} of Horovod 0.26.1 with \rev{CUDA 12.1,} Python 3.6.9 and \href{https://www.tensorflow.org/}{Tensorflow} 2.6.2. All throughout, we use \texttt{tensorflow.compat.v1} as a backend without eager execution.

All the results are ready for use in \href{https://github.com/pescap/HorovodPINNs}{HorovodPINNs} GitHub repository and fully reproducible\rev{, ensuring also compliance with FAIR principles (Findability, Accessibility, Interoperability, and Reusability) for scientific data management and stewardship \cite{wilkinson2016fair}.} We run experiments $8$ times with seeds defined as:
$$
\seed + 1000 \times \rank,
$$
in order to obtain rank-varying training points. For domain points, Latin Hypercube Sampling is performed with \href{https://pythonhosted.org/pyDOE/}{pyDOE} 0.3.8. Boundary points are defined over uniform grids. 

We use Glorot uniform initialization \cite[Chapter 8]{bengio2017deep}. ``Error'' refers to the $L^2$-relative error taken over $\mT^\textup{test}$, and ``Time\rev{''} stands for the training time in seconds. For each case, the loss in Eq.~\eqref{eq:loss} is with unit weights $\upomega_v=1$ and Monte-Carlo quadrature rule $\rev{w_v^i} = \frac{1}{N_v}$ for $v\in \Lambda$. Also, we set $\text{vol}(\ID)$ the volume of domain $\ID$ and
$$
\rho : =  \frac{{N_f}^{1/ (d+1)} }{{\text{vol}}(\ID)^{1 / (d+1)} }\rev{.}
$$
We introduce $t^k$ the time to perform $k$ iterations. The training points processed by second is as follows:
\be \label{eq:pointsec}
\text{pointsec} : = \frac{k N_f}{t^k}. 
\ee 
For the sake of simplicity, we summarize the parameters and hyper-parameters for each case in Table \ref{tab:OverviewHP}.
\begin{table}[h!t]
\renewcommand\arraystretch{1.5}
\begin{center}
\footnotesize
\begin{tabular}{
    |>{\centering\arraybackslash}m{2.5cm}
    |>{\centering\arraybackslash}m{1.1cm}
    |>{\centering\arraybackslash}m{0.8cm}
    |>{\centering\arraybackslash}m{0.8cm}
    |>{\centering\arraybackslash}m{1.1cm}
    |>{\centering\arraybackslash}m{1.1cm}
    |>{\centering\arraybackslash}m{1.1cm}
    |>{\centering\arraybackslash}m{.8cm}|
    }
    \hline
\multirow{2}{*}{Case}  & learning rate & width & depth& \multirow{2}{*}{iterations} & \multirow{2}{*}{$|\Theta|$} & \multirow{2}{*}{$N^\textup{test}$}  & \multirow{2}{*}{$\sigma$} \\ 
  &$l_r$&   $W$ &$L-1$ &  & & &\\ \hline \hline
 1D Laplace& $10^{-4}$ & $50$ & $4$  &  $20000$ & $7801$ & $N_f$ & $\tanh$ \\ \hline 
 1D Schrödinger & $10^{-4}$ & $50$ & $4$ & $30000$ & $30802$ & $N_f$ & $\tanh$ \\\hline
 2D Navier-Stokes & $10^{-4}$ & $20$ & $8$ & $30000$ & $3604$ & $N_f$ & $\tanh$ \\ \hline 
 \end{tabular}
\end{center}
\caption{Overview of the parameters and hyper-parameters for each case.}
\label{tab:OverviewHP} 
\end{table}

\subsection*{1D Laplace}\label{1DLaplace}We first consider the 1D Laplace equation in $D = [-1,7]$ as being:
\be 
- \Delta u = f \quad \textup{in }\quad D \quad\textup{with}\quad f = \pi^2 \sin(\pi x) \quad \textup{and}\quad  u(-1)=u(7)=0.
\ee 
Acknowledge that $u(x) = \sin(\pi x)$. We solve the problem for:
$$
N_f = 2^i \quad  i= 3,\cdots ,1\rev{6} \quad \textup{and}\quad N_f= \rev{100,200,300,400}.
$$
We set $N_g=2$ and $N_u = 0$. Points in $\mT_D$ are generated randomly over $D$, and $\mT_b = \{-1,7\}$. The residual in Eq.~\eqref{eq:PDEresiduals}:
$$
\xi^f = -\Delta u - f
$$
yields the loss:
$$
\mL_\theta = \mL_\theta^f + \mL_\theta^b \quad \textup{with} \quad \mL_\theta^f := \frac{1}{N_f} \sum_{\mT_f}|\xi_f|^2 \quad \textup{and}\quad \mL_\theta^b := \frac{1}{2} \left( |u(-1)|^2 + |u(7)|^2\right).
$$
\subsubsection*{$h$-analysis}We perform the $h$-analysis for the error as portrayed before in Figure \ref{fig:transient}. The asymptotic regime occurs between $N_f=\rev{64}$ and $N_f=\rev{400}$, with a precision of $\rho=8$ in accordance with general results for $h$-analysis of traditional solvers. The permanent regime shows a slight improvement in accuracy, with \rev{mean} ``Error'' dropping from \rev{$8.75 \times 10^{-3}$ for $N_f=400$ to $3.91 \times 10^{-3}$ for $N=65536$}. To complete the $h$-analysis, Figure~\ref{fig:convergenceLaplace} shows the convergence results of ADAM optimizer for all the values of $N_f$. This plot reveals that each regime exhibits similar patterns. The high variability in convergence during the transition regime is particularly interesting, with some runs converging and others not. In the permanent regime, the convergence shows almost identical and stable patterns irrespective of $N_f$.
\begin{figure}[!htb]
\center
\includegraphics[width=\textwidth]{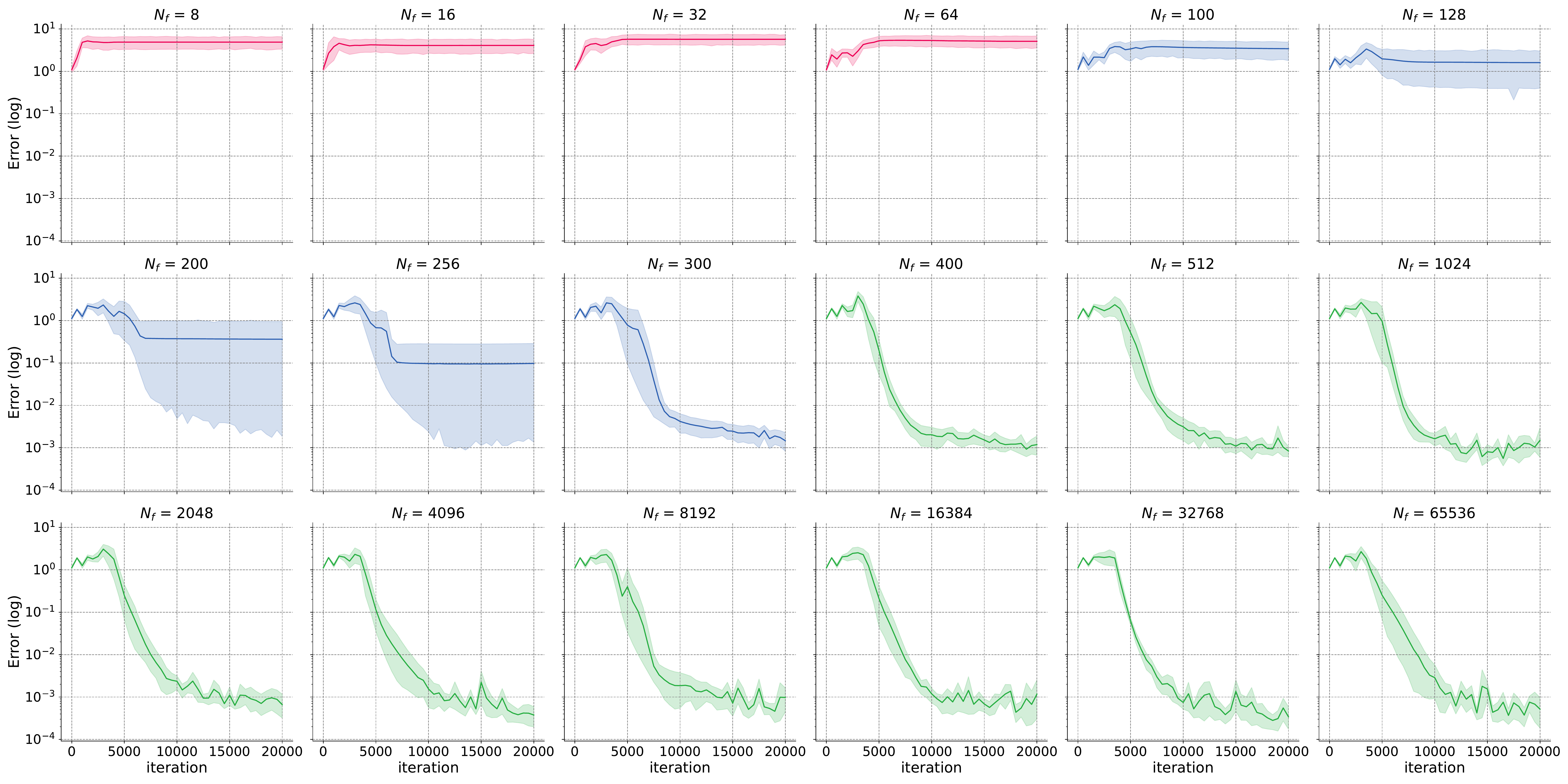}
\caption{1D Laplace: Convergence error for ADAM v/s $N_f$ for $l_r=10^{-4}$ and $20000$ iterations.}
\label{fig:convergenceLaplace}
\end{figure}

\rev{Furthermore, we plot the training and test losses in Figure~\ref{fig:lossesLaplace}. Acknowledge that the validation loss and ``Error'' show similar behaviors. We use this figure as a reference to define each transition regime. In particular, it hints that the permanent regime is reached for $N=400$ as the relative error at best iteration between $\mL_\theta^\text{train}$ and $\mL_\theta^\text{test}$ drops from $1.60 \times 10^{-1}$ to $6.02 \times 10^{-5}$. For the sake of precision, the value for $N=512$ is $2.20 \times 10^{-5}$.}

\begin{figure}[!htb]
\center
\includegraphics[width=\textwidth]{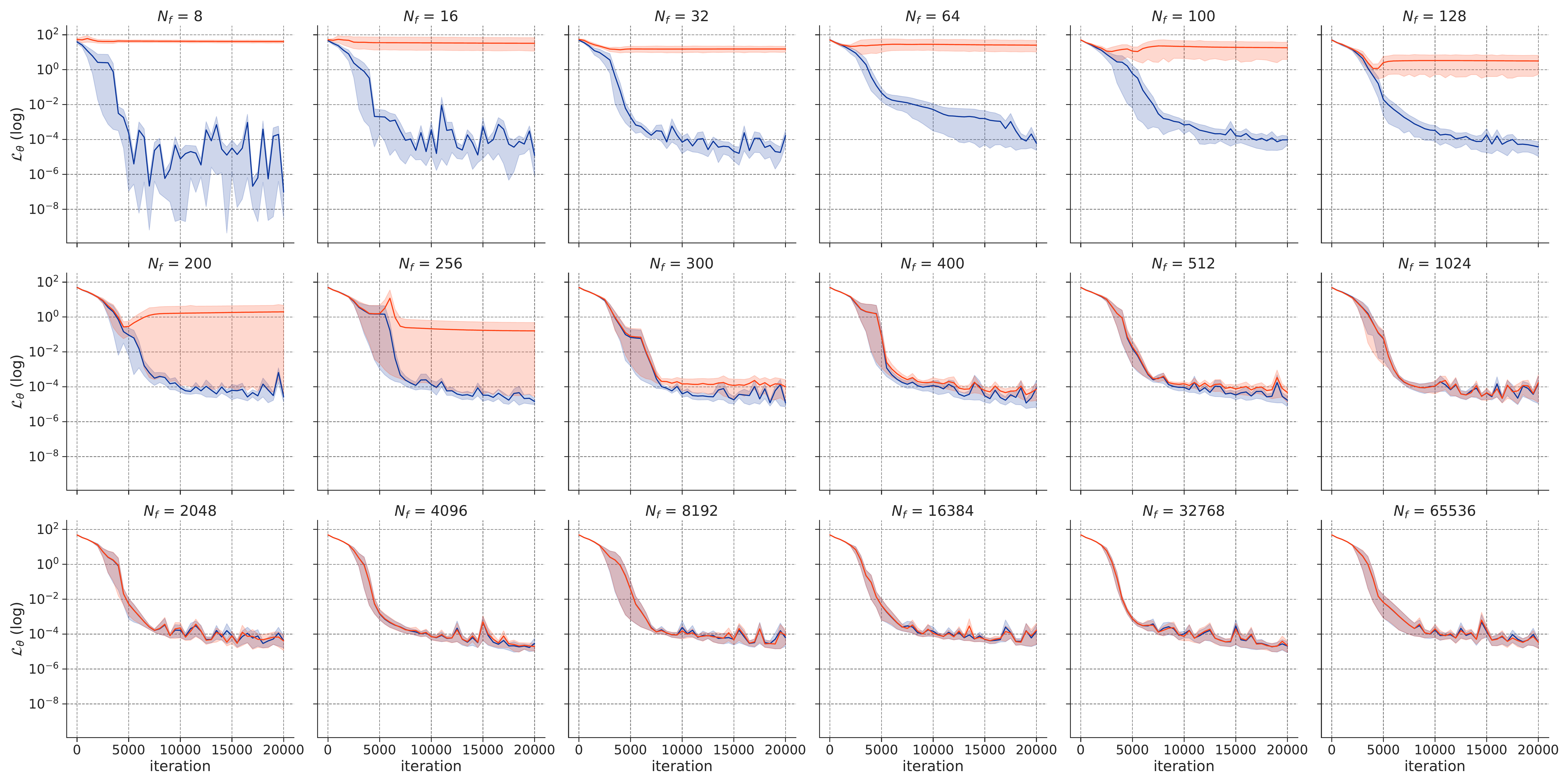}
\caption{1D Laplace: Training (blue) and test (orange) losses for ADAM v/s $N_f$ for $l_r=10^{-4}$ and $20000$ iterations.}
\label{fig:lossesLaplace}
\end{figure}

\subsubsection*{Data-parallel implementation}
We set $N_{f,1}\equiv N_1= \rev{64}$ and compare both weak and strong scaling to original implementation, referred to as ``no scaling''.
\begin{figure}[!htb]
\center
\includegraphics[width=.8\textwidth]{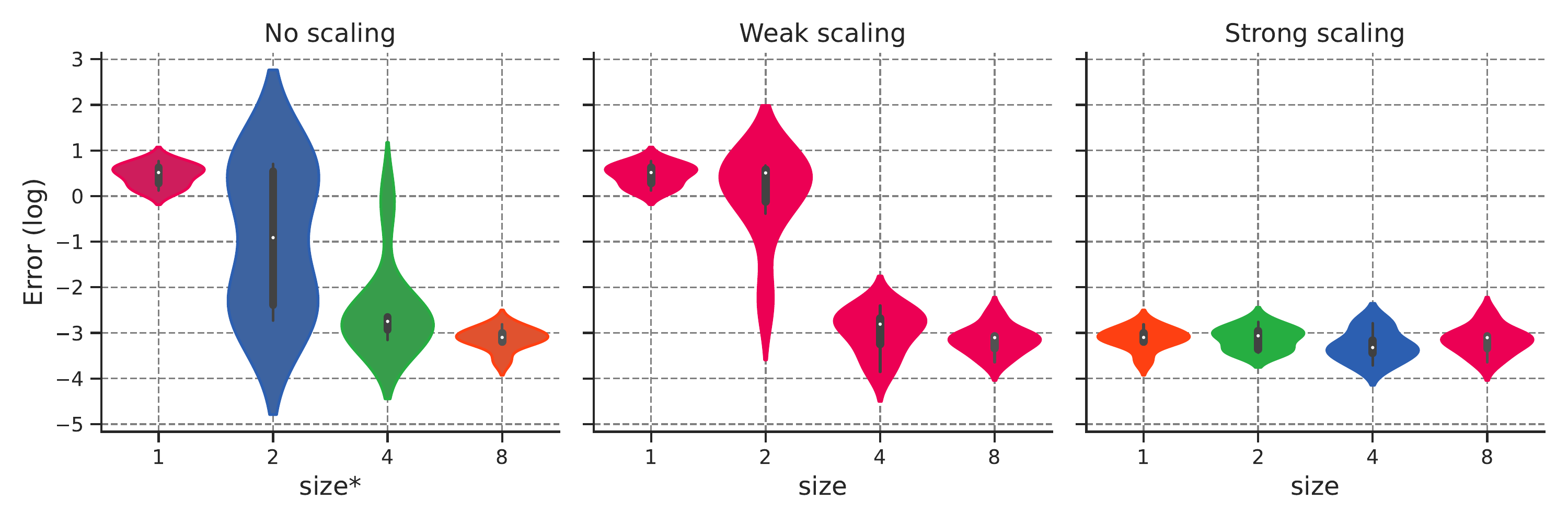}
\caption{Error v/s $\size^*$ for the different scaling options.}
\label{fig:LaplaceAccuracy}
\end{figure}

We provide a detailed description of Fig.~\ref{fig:LaplaceAccuracy}, as it will serve as the basis for future cases:
\begin{itemize}\setlength{\itemsep}{0pt}
    \item Left-hand side: Error for $\size^* \in\{1,2,4,8\}$ corresponding to $N_f\in \{64,128, \rev{256,512}\}$. 
    \item Middle: Error for weak scaling with $N_{1}=64$ and for $\size \in \{1,2,4,8\}$;
    \item Right-hand side: Error for strong scaling with $N_{1}=\rev{512}$ and for $\size \in \{1,2,4,8\}$
\end{itemize}
To reduce ambiguity, we use the $*$-superscript for no scaling, as $\size^*$ is performed over $1$ rank. The color for each violin box in the figure corresponds to the number of \rev{domain collocation points} used for each GPU. 

Fig.~\ref{fig:LaplaceAccuracy} demonstrates that both weak and strong scaling yield similar convergence results to their unaccelerated counterpart. This result is one of the main findings in this work: PINNs scale properly with respect to accuracy, validating the intuition behind $h$-analysis and justifying the data-parallel approach. This allows one to move from pre-asymptotic to permanent regime by using weak scaling, or leverage the cost of a permanent regime application by dispatching the training points over different workers. Furthermore, the hyper-parameters, including the learning rate, remained unchanged.

Next, we summarize the data-parallelization results in Table \ref{tab:OverviewPoisson1D} with respect to $\size$. 
\begin{table}[ht!]
\renewcommand\arraystretch{1.5}
\begin{center}
\footnotesize
\begin{tabular}{
    |>{\centering\arraybackslash}m{.8cm}
    ||>{\centering\arraybackslash}m{1.5cm}
    |>{\centering\arraybackslash}m{1.5cm}
    ||>{\centering\arraybackslash}m{1.5cm}
    |>{\centering\arraybackslash}m{1.5cm}
    |>{\centering\arraybackslash}m{1.5cm}
    |>{\centering\arraybackslash}m{1.5cm}
    }
    \hline
& \multicolumn{2}{c||}{$t^\text{500}$}& \multicolumn{2}{c|}{$E_\text{ff}$}\\ \hline 
$\size$ & Weak scaling & Strong scaling  & Weak scaling & Strong scaling\\ \hline\hline
$1$ & $\rev{4.08} \pm \rev{0.16}$ & $\rev{4.19} \pm \rev{0.18}$  & $-$ & $-$\\ \hline
$2$ & $ \rev{5.23}\pm \rev{0.13}$ & $ \rev{5.21}\pm \rev{0.16}$  & $\rev{78.01}\%$ &  $\rev{80.42}\%$\\ \hline
$4$ & $\rev{5.58}\pm \rev{0.20}$ & $\rev{5.65}\pm \rev{0.09}$ & $\rev{73.11}\%$ & $\rev{74.16}\%$\\ \hline
$8$ & $\rev{6.12}\pm \rev{0.23}$ &  $\rev{6.12}\pm \rev{0.23}$ & $\rev{66.67}\%$ &  $\rev{68.46}\%$\\ \hline
 \end{tabular}
\end{center}
\caption{1D Laplace: $t^{500}$ in seconds and efficiency $E_\text{ff}$ for the weak and strong scaling.}
 \label{tab:OverviewPoisson1D} 
\end{table}
In the first column, we present the time required to run $500$ iterations for ADAM, referred to as $t^\text{500}$. This value is averaged over one run of $30000$ iterations with a heat-up of $1500$ iteration (i.e.~we discard the values corresponding to iteration\rev{s} $0,500$ and $1000$). We present the resulting mean value $\pm$ standard deviation for the resulting vector. The second column, displays the efficiency of the run, evaluated with respect to $t^\text{500}$. 

\rev{Table \ref{tab:OverviewPoisson1D}} reveals that data-based acceleration results in supplementary training times as anticipated. Weak scaling efficiency varies between $\rev{78.01}\%$ for $\size=\rev{2}$ to $\rev{66.67}\%$ for $\size=8$, resulting in a speed-up of $\rev{5.47}$ when using $8$ GPUs. Similarly, strong scaling shows similar behavior. Furthermore, it can be observed that $\size=1$ yields almost equal $t^{500}$ for $N_f=\rev{64}$ ($\rev{4.08}s$) and $N_f=\rev{512}$ ($\rev{4.19}s$).

To conclude, the Laplace case is not reaching its full potential with Horovod due to the small batch size $N_{1}$. However, increasing the value of $N_{1}$ in \rev{next} cases will lead to a noticeable increase in efficiency.

\subsection*{1D Schrödinger}\label{Schrodinger}We solve the non-linear Schrödinger equation along with periodic BCs (refer to \cite[Section 3.1.1]{RAISSI2019686}) given over $\overline{D}\times \rev{[0, T ] }$ with $D:=(-5,5)$ and $T := \pi/ 2$:
\begin{align*}
\imath u_t + 0.5 u_{xx} + |u|^2 u &= 0 \quad \textup{in} \quad D \times (0,T),\\
u(-5,t) & =u(5,t), \rev{\quad t \in [0,T],}\\
\partial_x u(-5,t) & = \partial_x u(5,t),\rev{\quad t \in [0,T],}\\
u(x,0) & = 2 \sech(x),\rev{\quad x \in  D,}
\end{align*}
where $u(x,t) = u^0(x,t) + \imath u^1(x,t)$. We apply PINNs to the system with $m=2$ \rev{in Eq.~\eqref{eq:DNN}} as $(u^0_\theta,v^1_\theta) \in \IR^m=\IR^2$. We set $N_g=N_\hbar = 200$.
\subsubsection*{$h$-analysis}

To begin with, we perform the $h$-analysis for the parameters in Fig.~\ref{fig:transientSchrodinger}.
\begin{figure}[!htb]
\center
\includegraphics[width=\textwidth]{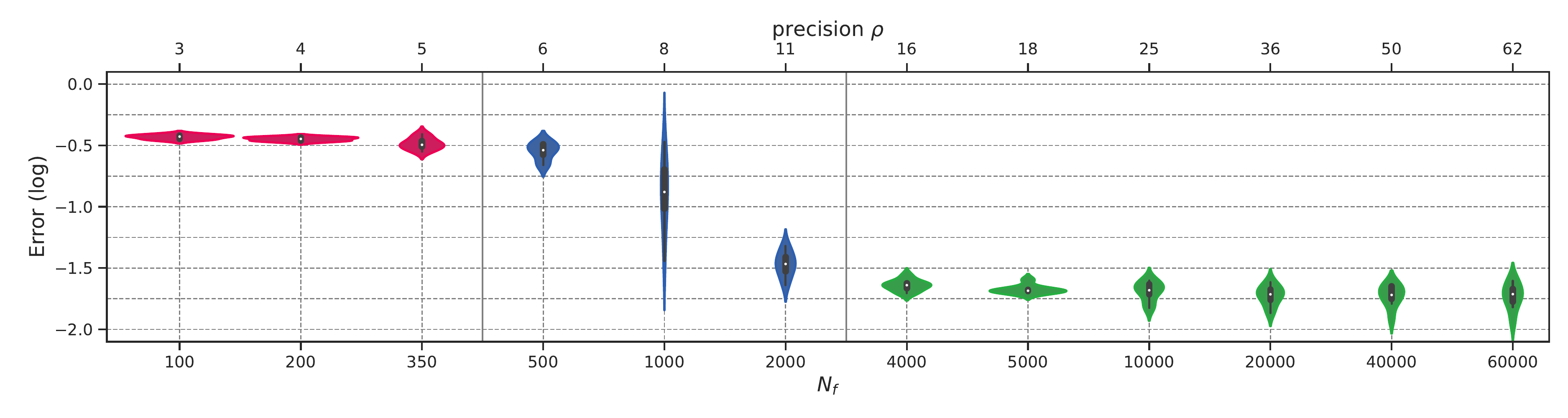}
\caption{Schrödinger: Error v/s $N_f$.}
\label{fig:transientSchrodinger}
\end{figure}
Again, the transition regime for the total execution time distribution begins at a density of $\rho=5$ and spans between $N_f=350$ and $N_f=4000$. At higher magnitudes of $N_f$, the error remains approximately the same. To illustrate this more complex case, we present the total execution time distribution in Fig.~\ref{fig:transientTimesSchrodinger}. 
\begin{figure}[!htb]
\center
\includegraphics[width=\textwidth]{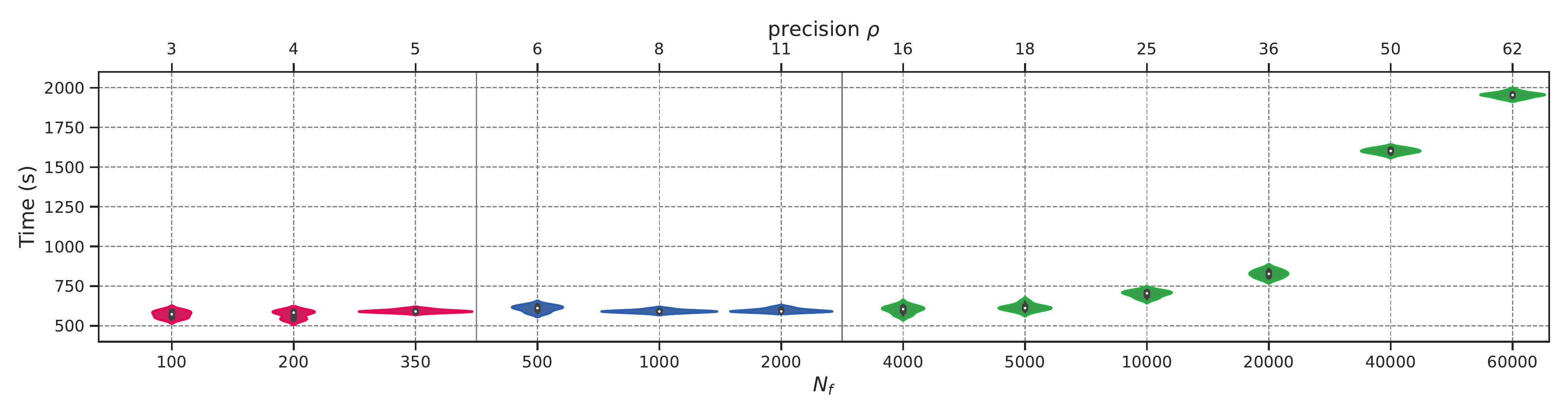}
\caption{Schrödinger: Time (in seconds) v/s $N_f$.}
\label{fig:transientTimesSchrodinger}
\end{figure}
We note that the training times remain stable for $N_f\leq4000$. This observation is of great importance and should be emphasized, as it adds additional parameters to the analysis. Our work primarily focuses on error in $h$-analysis, however, execution time and memory requirements are also important considerations. We see that in this case, weak scaling is not necessary (the optimal option is to use $\size = 1$ and $N_f = 4000$). Alternatively, strong scaling can be done with $N_{f,1} = 4000$. 

To gain further insight into the variability of the transition regime, we focus on $N_f = 1000$. We compare the solution for $\seed = 1234$ and $\seed = 1236$ in Fig.~\ref{fig:SchrodingerSeeds}.
\begin{figure}[!htb]
\centering
\begin{subfigure}{.5\textwidth}
  \centering
  \includegraphics[width=\linewidth]{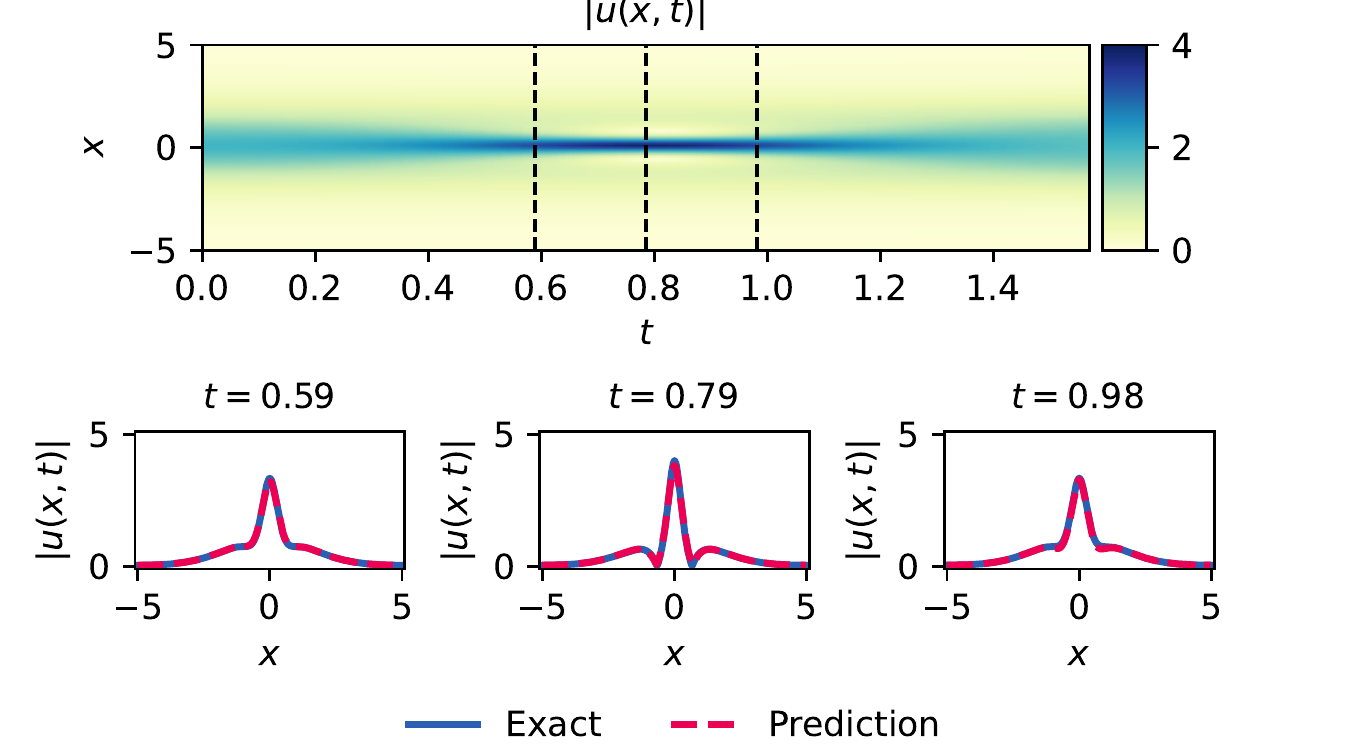}
  \caption{Results for $\seed=1234$}
  \label{fig:sub1}
\end{subfigure}%
\begin{subfigure}{.5\textwidth}
  \centering
  \includegraphics[width=\linewidth]{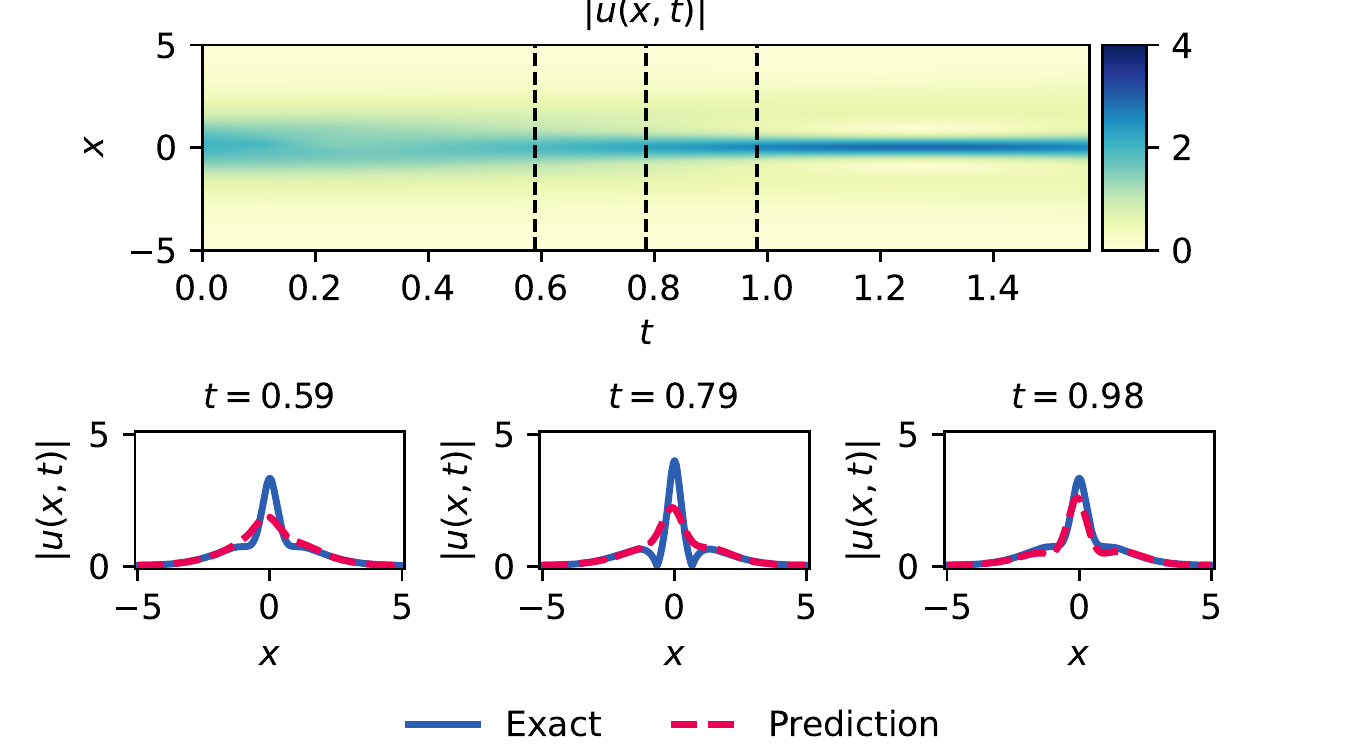}
  \caption{Results for $\seed=1236$}    
  \label{fig:sub2}
\end{subfigure}
\caption{Schrödinger: Solution for $N_f=1000$ for two differents seeds.}
\label{fig:SchrodingerSeeds}
\end{figure}
The upper figures depict $|u(t,x)|$ predicted by the PINN. The lower figures show a comparison between the exact and predicted solutions are plotted for $t\in \{0.59,0.79,-0.98\}$. It is evident that the solution for $\seed=1234$ closely resembles the exact solution, whereas the solution for $\seed=1236$ fails to accurately represent the solution near $x=0$, thereby illustrating the importance of achieving a permanent regime. \rev{Next, we show the training and testing losses in Fig.~\ref{fig:lossesSchrodinger}. We remark that the training and test losses converge for $N_f >200$. Analysis of the train-test gap showed that it converged as $\mO(N_f^{-1})$. Visually, one can assume that the losses are close enough for $N_f=4000$ (or $N_f=8000$), in accordance with the $h$-analysis performed in Fig.~\ref{fig:transientSchrodinger}.}

\begin{figure}[!htb]
\center
\includegraphics[width=\textwidth]{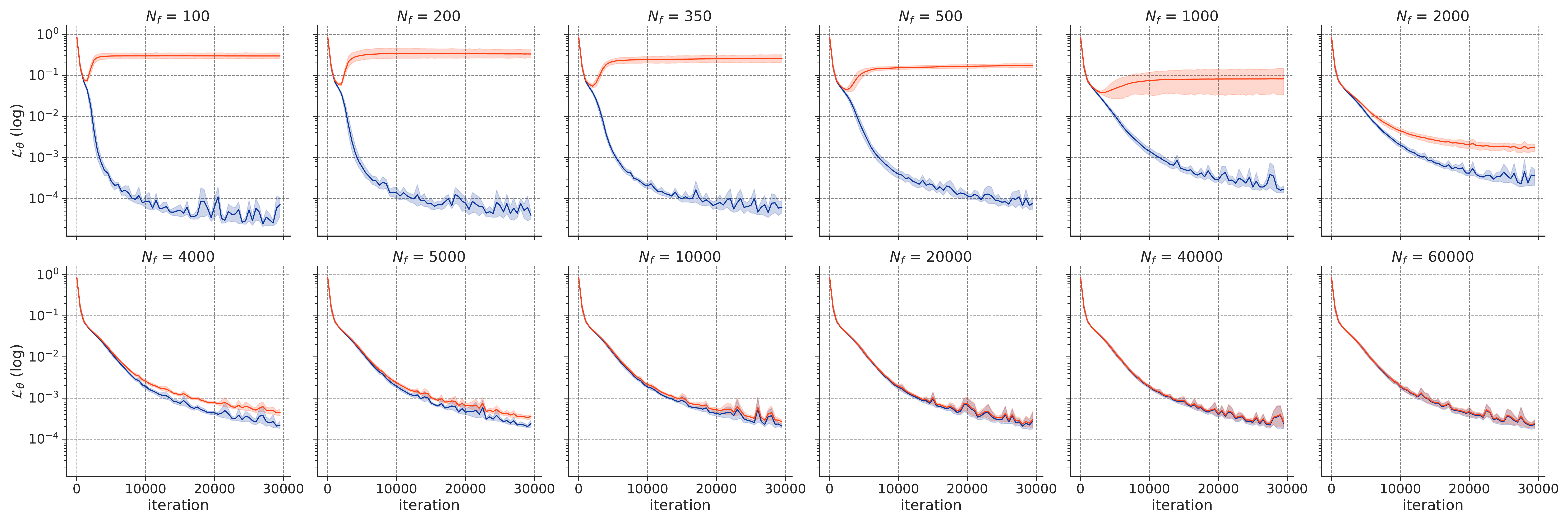}
\caption{Schrödinger: Training (blue) and test (orange) losses for ADAM v/s $N_f$ for $l_r=10^{-4}$ and $20000$ iterations.}
\label{fig:lossesSchrodinger}
\end{figure}

\subsubsection*{Data-parallel implementation}
We compare the error for unaccelerated and data-parallel implementations of simulations for $N_{f,1}\equiv N_{1}=500$ in Fig.~\ref{fig:SchrodingerAccuracy}, analogous to the analysis in Fig.~\ref{fig:LaplaceAccuracy}.
\begin{figure}[!htb]
\center
\includegraphics[width=.8\textwidth]{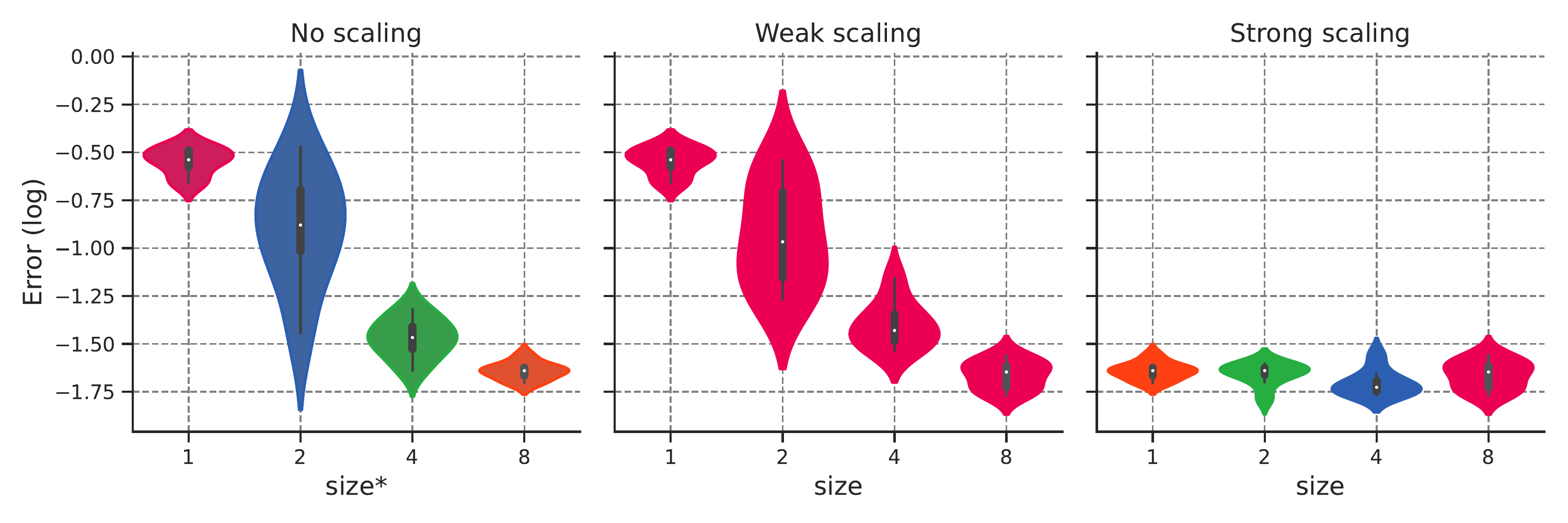}
\caption{Schrödinger: Error v/s $\size^*$ for the different scaling options.}
\label{fig:SchrodingerAccuracy}
\end{figure}
Again, the error is stable with $\size$. Both no scaling and weak scaling are similar. Strong scaling is unaffected by $\size$. We plot the training time in Fig.~\ref{fig:SchrodingerTime}.
\begin{figure}[!htb]
\center
\includegraphics[width=.8\textwidth]{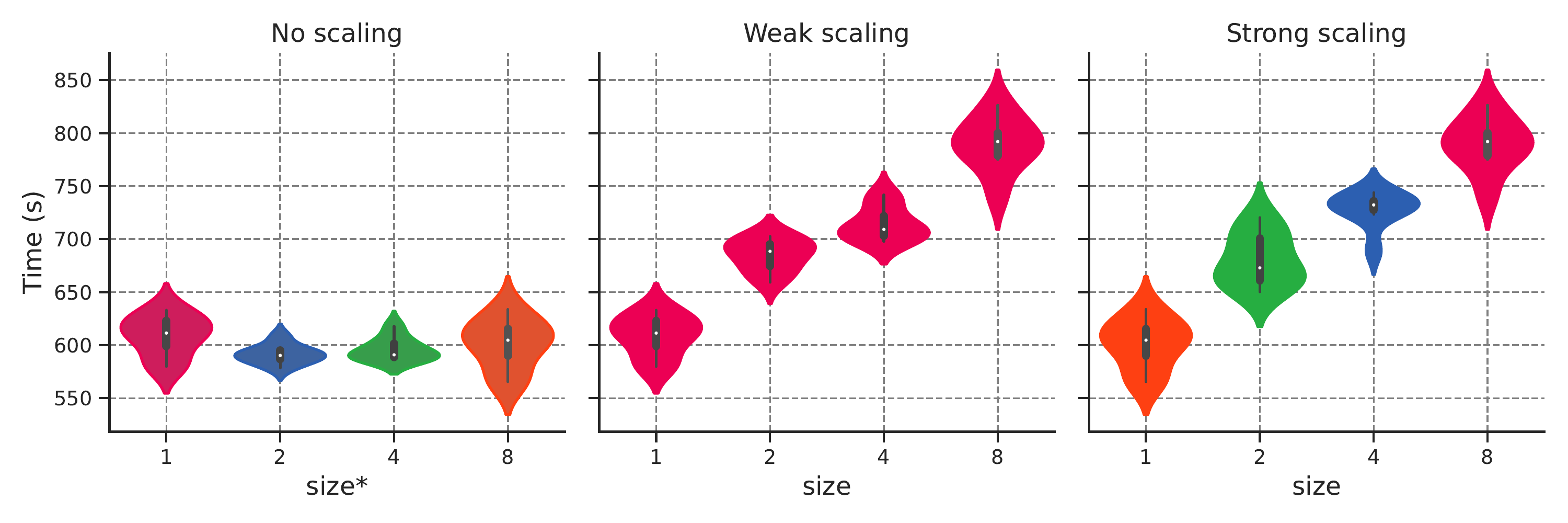}
\caption{Schrödinger: Time (in seconds) v/s $\size^*$ for the different scaling options.}
\label{fig:SchrodingerTime}
\end{figure}
We observe that both weak and strong scaling increase linearly and slightly with $\size$. Both scaling show similar behaviors. Fig.~\ref{fig:SchrodingerEfficiency} \rev{portrays the number of training points processed per second (refer to Eq.~\eqref{eq:pointsec})} and the efficiency with respect to $\size$, with white bars representing the ideal scaling. Efficiency $E_\text{ff}$ shows a gradual decrease with $\size$, with results surpassing those of the previous section. The efficiency for $\size=8$ reaches $77\rev{.22}\%$ and $76\rev{.20}\%$ respectively for weak and strong scaling, representing a speed-up of $6.\rev{18}$ and $6.1\rev{0}$.
\begin{figure}[!htb]
\center
\includegraphics[width=.8\textwidth]{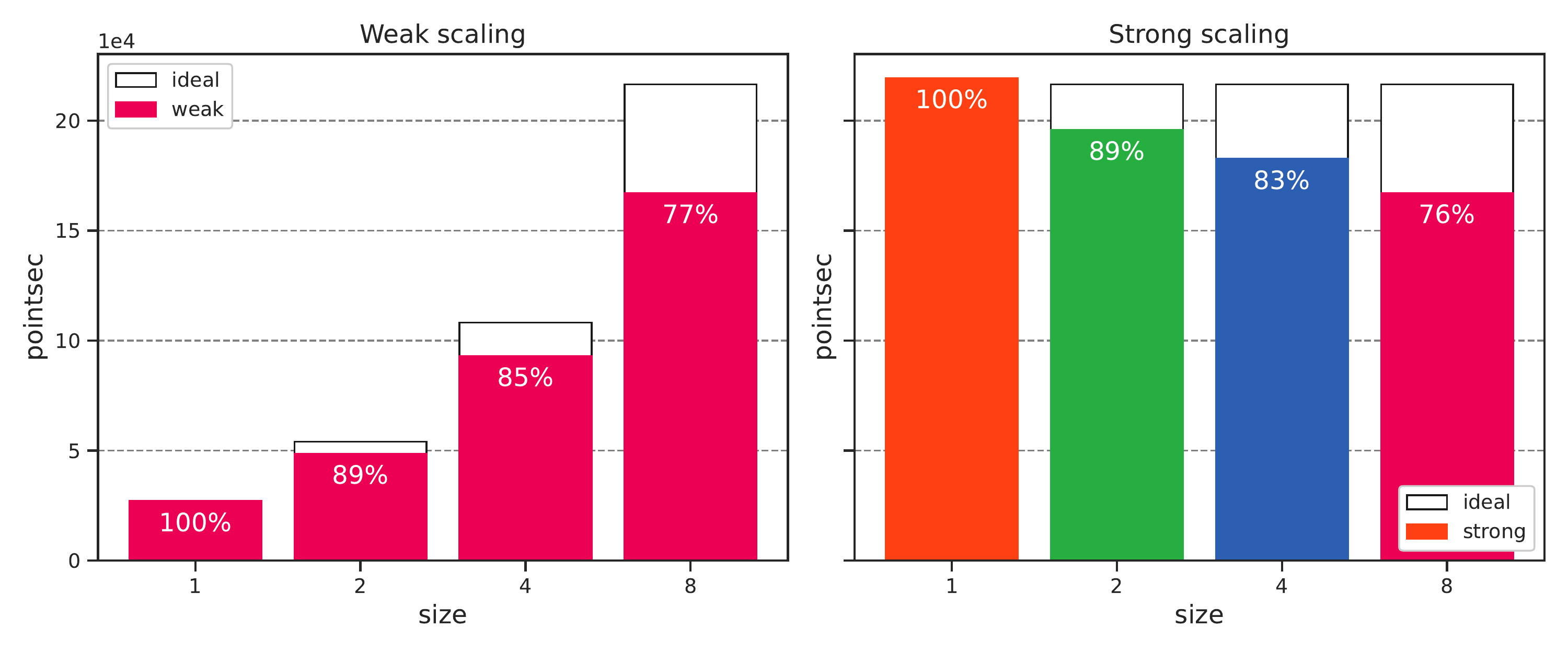}
\caption{Schrödinger: Efficiency v/s $\size$.}
\label{fig:SchrodingerEfficiency}
\end{figure}

\subsection*{Inverse problem: Navier-Stokes equation}\label{NavierStokes}We consider the Navier-Stokes problem with $D : =[-1,8] \times [-2,2]$, $T=20$ and unknown parameters $\uplambda_1\uplambda_2 \in \IR$. The resulting divergence-free Navier Stokes is expressed as follows:
\be\label{eq:NS2D}
\begin{cases}
u_t + \lambda_1 ( u u_x +  v u_y )& = - p_x + \lambda_2 (u_{xx} + u_{yy}) ,\\
v_t + \lambda_1 ( u v_x +  v v_y )& = - p_y + \lambda_2 (v_{xx} + v_{yy}) ,\\
u_x + u_t & = 0,
\end{cases}
\ee
wherein $u(x,y,t)$ and $v(x,y,t)$ are the $x$ and $y$ components of the velocity field, and $p(x,y,t)$ the pressure. We assume that there exists $\varphi(x,y,t)$ such that:
\be\label{eq:potential}
u = \varphi_x ,\quad \textup{and}\quad v = - \varphi_x.
\ee
Under Eq.~\eqref{eq:potential}, last row in Eq.~\eqref{eq:NS2D} is satisfied. The latter leads to the definition of residuals:
\begin{align*} 
\xi_1&: =  u_t + \lambda_1 ( u u_x +  v u_y ) +  p_x - \lambda_2 (u_{xx} + u_{yy})\\
\xi_2  & : = v_t + \lambda_1 ( u v_x +  v v_y ) + p_y -\lambda_2 (v_{xx} + v_{yy}).
\end{align*}
\rev{We introduce $N_f$ pseudo-random points $\bx^i \in D$ and observations $(u^i,v^i)= (u^i (\bx^i), v^i (\bx^i)) $,} yielding the loss:
\begin{align*}
\mL_\theta : = \mL_\theta^u  +  \mL^f_{\theta} 
\end{align*}
with
$$
\mL_\theta^u = \frac{1}{\rev{N_f}} \sum_{i=1}^{\rev{N_f}} \left(|u(\bx^i)-u^i|^2 + |v(\bx^i)-v^i|^2\right)\quad\textup{and}\quad \mL_\theta^f = \frac{1}{\rev{N_f}} \sum_{i=1}^{\rev{N_f}} \left(\xi_1(\bx^i)\rev{^2} + \xi_2(\bx^i)\rev{^2}\right).
$$
Throughout this case, we have $N_f=M$, and we plot the results with respect to $N_f$. Acknowledge that $N = 2 N_f$, and that both $\lambda_1,\lambda_2$ and the BCs are unkwown.
\subsubsection*{$h$-analysis}We conduct the $h$-analysis and show the error in Fig.~\ref{fig:transientNavierStokes}, showing no differences with previous cases.
\begin{figure}[!htb]
\center
\includegraphics[width=\textwidth]{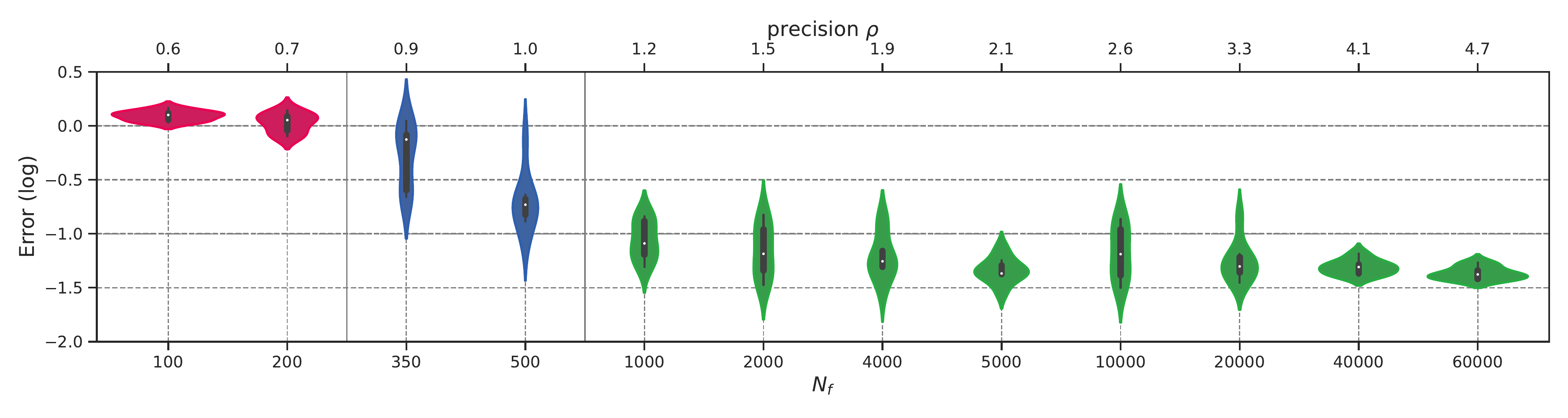}
\caption{Navier-Stokes: Error v/s $N_f$.}
\label{fig:transientNavierStokes}
\end{figure}
 Surprisingly, the permanent regime is reached only for $N_f=1000$, despite the problem being a $3$-dimensional, non-linear, and inverse one. This corresponds to low values of $\rho$, indicating that PINNs seem to prevent the curse of dimensionality. In fact, the it was achieved with only $1.2\rev{1}$ points per unit per dimension. The total training time is presented in Fig.~\ref{fig:transientTimesNavierStokes}, where it can be seen to remain stable up to $N_f=5000$, and then increases linearly with $N_f$.
\begin{figure}[!htb]
\center
\includegraphics[width=\textwidth]{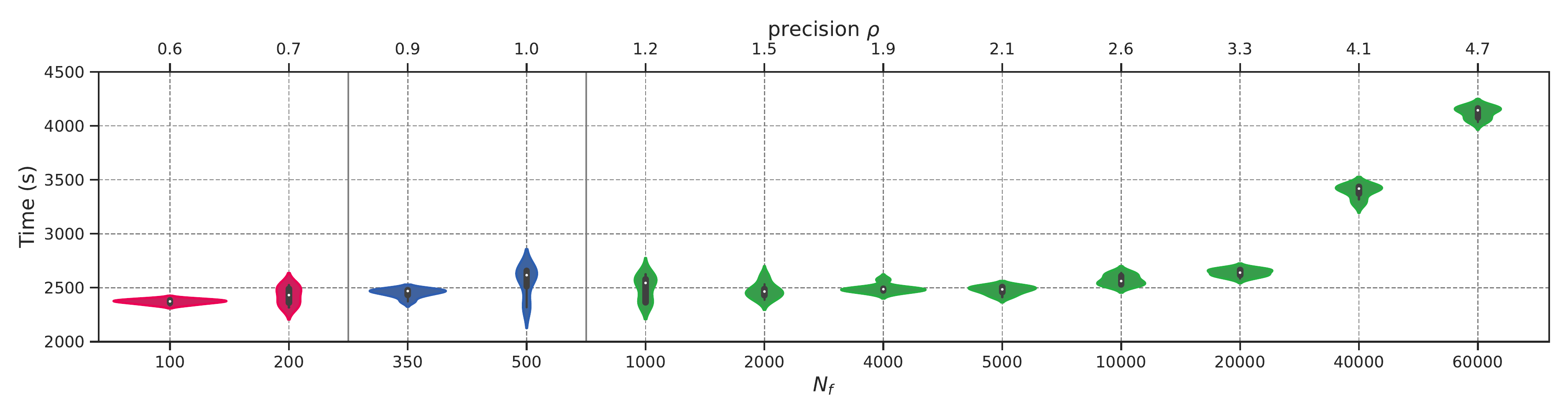}
\caption{Navier-Stokes: Time (in seconds) v/s $N_f$.}
\label{fig:transientTimesNavierStokes}
\end{figure}

\rev{Again, we represent the training and test losses in Fig.~\ref{fig:lossesNavier}. The train-test gap is shown to decrease for $N_f \geq 350$ as $\mO(N_f^{-1})$. Furthermore, the train-test gap can be considered as being small enough, visually, for $N_f = 1000$.}
\begin{figure}[!htb]
\center
\includegraphics[width=\textwidth]{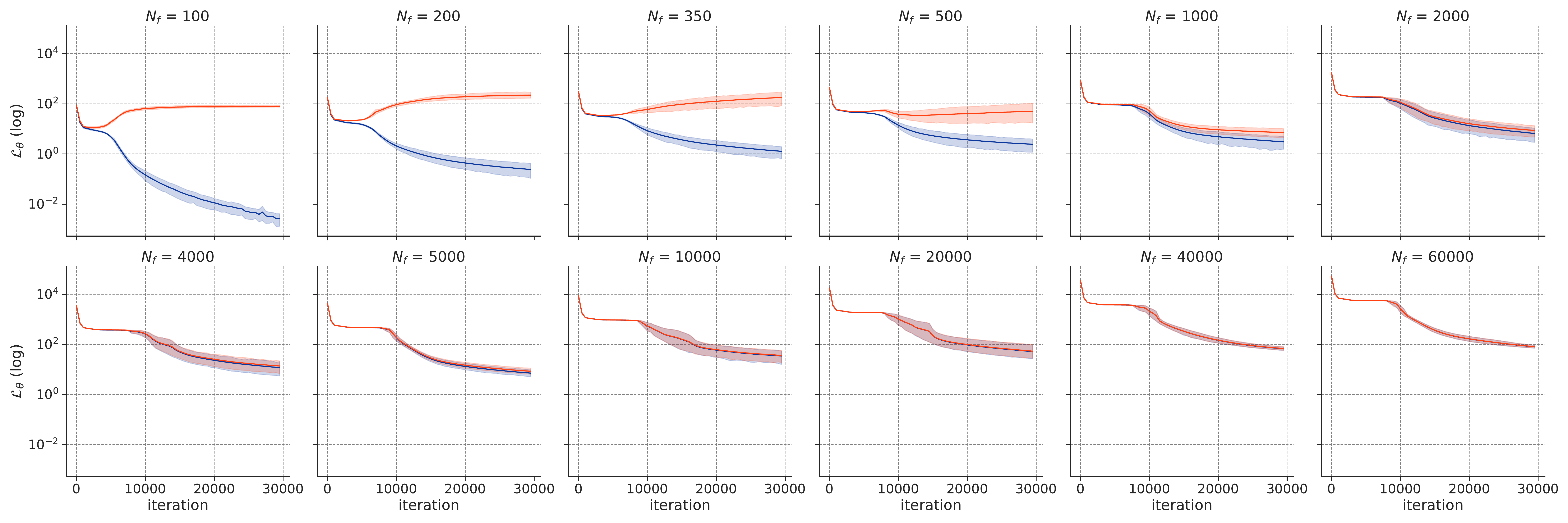}
\caption{Navier-Stokes: Training (blue) and test (orange) losses for ADAM v/s $N_f$.}
\label{fig:lossesNavier}
\end{figure}

\subsubsection*{Data-parallel implementation}We run the data-parallel simulations, setting $N_{f,1} \equiv N_{1} = M_1= 500$. As shown in Fig.~\ref{fig:NavierStokesAccuracy}, the simulations exhibit stable accuracy with $\size$.
\begin{figure}[!htb]
\center
\includegraphics[width=.8\textwidth]{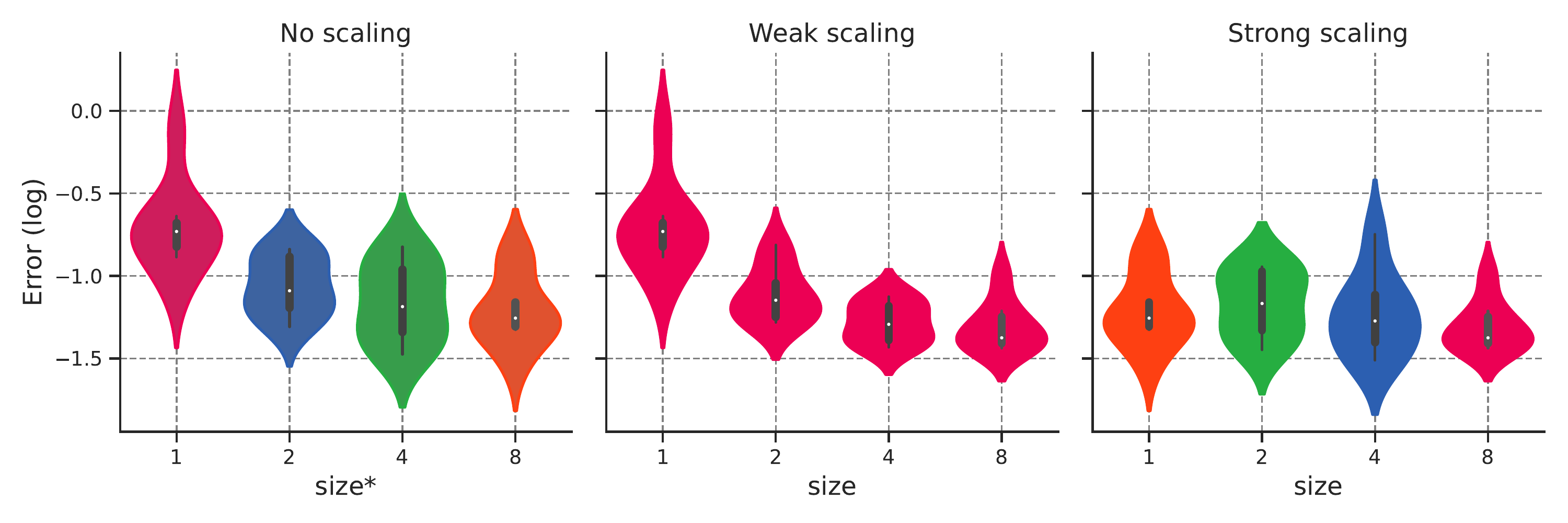}
\caption{Navier-Stokes: Error v/s $\size^*$ for the different scaling options.}
\label{fig:NavierStokesAccuracy}
\end{figure}
The execution time increases moderately with $\size$, as illustrated in Fig.~\ref{fig:NavierStokesTime}. The training time decreases with $N$ for the no scaling case. However, this behavior is temporary (refer to Fig.~\ref{fig:transientNavierStokes} before).
\begin{figure}[!htb]
\center
\includegraphics[width=.8\textwidth]{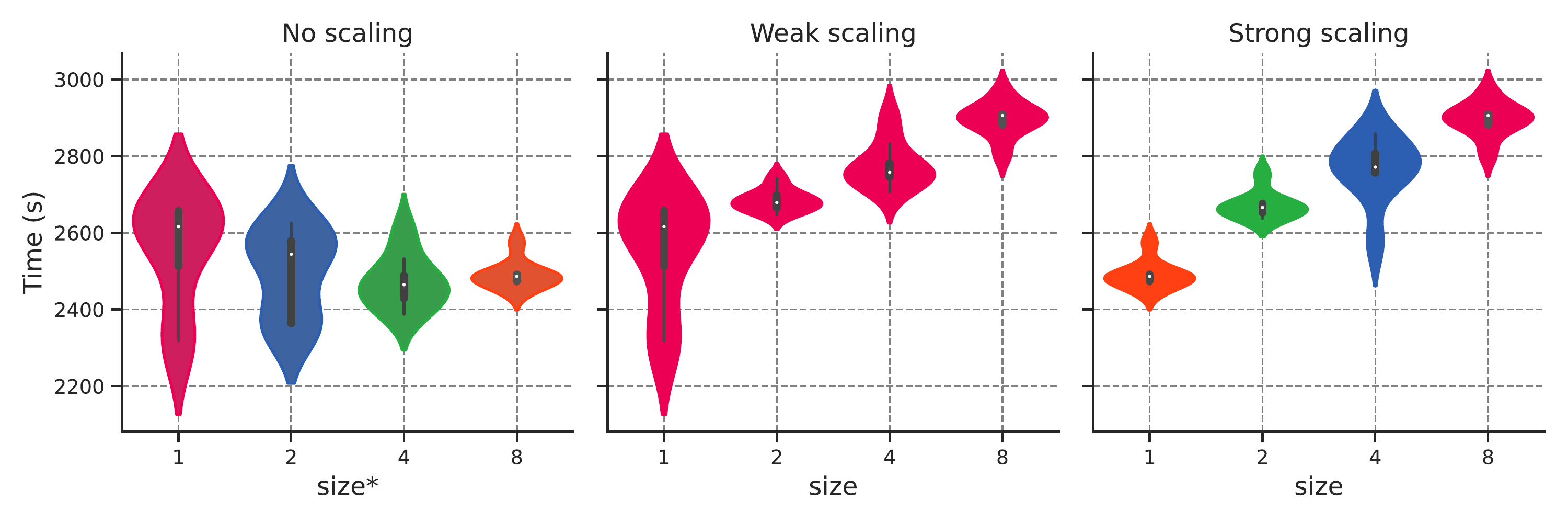}
\caption{Navier-Stokes: Time (in seconds) v/s $\size^*$ for the different scaling options.}
\label{fig:NavierStokesTime}
\end{figure}
We conclude our analysis by plotting the efficiency with respect to $\size$ in Fig.~\ref{fig:NavierStokesEfficiency}. Efficiency lowers with increasing $\size$, but shows the best results so far, \rev{with $80\rev{.55}\%$ (resp.~$86.31\%$) weak (resp.~strong) scaling efficiency for $\size = 8$}. \rev{For the sake of completeness, the weak efficiency for $N_{f,1}=50000$ and $\size=8$ improved to $86.15\%$.} This encouraging result sets the stage for further exploration of more intricate applications.

\begin{figure}[!htb]
\center
\includegraphics[width=.8\textwidth]{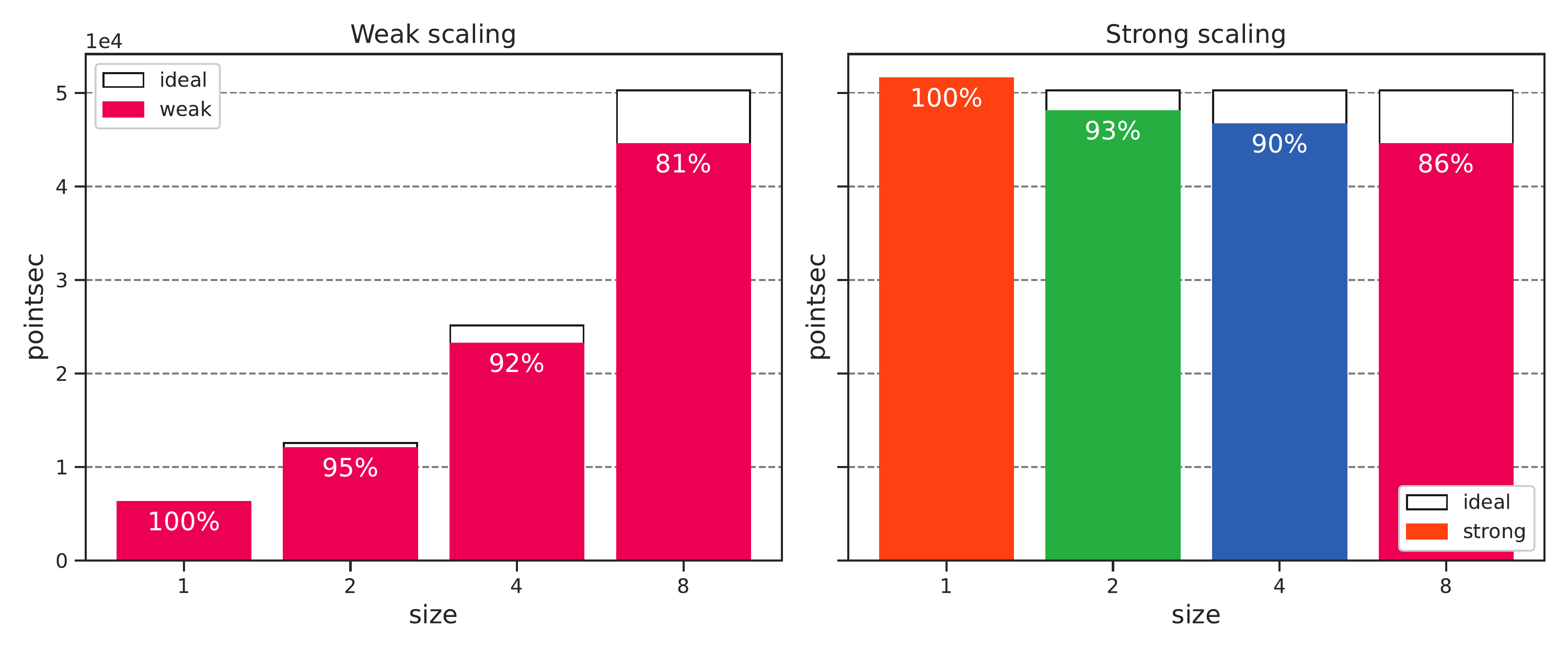}
\caption{Navier-Stokes: Efficiency v/s $\size$.}
\label{fig:NavierStokesEfficiency}
\end{figure}

\section*{Conclusion}\label{sec:Conclusion}
In this work, we proposed a novel data-parallelization approach for PIML with a focus on PINNs. We provided a thorough $h$-analysis and associated theoretical results to support our approach, as well as implementation considerations to facilitate implementation with Horovod data acceleration. Additionally, we ran reproducible numerical experiments to demonstrate the scalability of our approach. Further work include the implementation of Horovod acceleration to \href{https://github.com/lululxvi/deepxde/}{DeepXDE}\cite{lu2021deepxde} library, coupling of localized PINNs with domain decomposition methods, and application on larger GPU servers (e.g., with more than 100 GPUs).

\section*{Data availability}
The code required to reproduce these findings are available to download from
 \href{https://github.com/pescap/HorovodPINNs}{https://github.com/pescap/HorovodPINNs}.

\bibliography{references}


\section*{Acknowledgements}

The authors would like to thank the Data Observatory Foundation, ANID FONDECYT 3230088, FES-UAI postdoc grant, ANID PIA/BASAL FB0002, and ANID/PIA/ANILLOS ACT210096, for financially supporting this research.

\section*{Author contributions statement}P.E.I. conceived the experiment(s). All authors analyzed the results and reviewed the manuscript. 

\end{document}